\documentclass[11pt]{article}
\usepackage{lmodern} 
\usepackage[T1]{fontenc}

\usepackage{color}
\usepackage[colorlinks=true,linkcolor=blue,citecolor=blue]{hyperref}
\usepackage{amsmath, amssymb, amsthm}
\usepackage{mathtools}
\usepackage[margin=1in]{geometry}
\usepackage{graphics}
\usepackage{pifont}
\usepackage{tikz}
\usepackage{bbm, bm}

\usetikzlibrary{arrows.meta}
\usepackage{environ}
\usepackage{framed}
\usepackage{url}
\usepackage[linesnumbered,ruled,vlined]{algorithm2e}
\usepackage[noend]{algpseudocode}
\usepackage[labelfont=bf]{caption}
\usepackage{framed}
\usepackage[framemethod=tikz]{mdframed}
\usepackage{appendix}
\usepackage{graphicx}
\usepackage[textsize=tiny]{todonotes}
\usepackage{tcolorbox}
\allowdisplaybreaks[3]
\usepackage{nicefrac}
\usepackage{thm-restate}
\usepackage[noabbrev,capitalize,nameinlink]{cleveref}
\crefname{equation}{}{}
\usepackage{pdfsync}
\usepackage{tabularx, ragged2e, booktabs}

\usepackage{lipsum}
\usepackage{enumerate}
\usepackage{float}

\usepackage{subcaption}

\DontPrintSemicolon
\SetKw{KwAnd}{and}
\SetProcNameSty{textsc}
\SetFuncSty{textsc}

\newcommand\remove[1]{}
\renewcommand\paragraph[1]{\smallskip \noindent \textbf{#1}}

\newtheorem{lemma}{Lemma}[section]
\newtheorem*{lemma*}{Lemma}
\newtheorem{theorem}[lemma]{Theorem}

\newtheorem*{corollary*}{Corollary}

\newtheorem*{theorem*}{Theorem}
\newtheorem*{inducthyp*}{Inductive Hypothesis}
\newtheorem*{definition*}{Definition}

\newtheorem{prob}[lemma]{Problem}
\newtheorem*{rem*}{Remark}

\newcommand\R{\mathbb{R}}

\newcommand{\eps}{\epsilon}
\renewcommand{\O}{\widetilde{O}}

\renewcommand{\l}{\langle}
\renewcommand{\r}{\rangle}

\newcommand{\otilde}{\O}

\newcommand{\supp}{\mathrm{supp}}

\newcommand{\bc}{\bm{c}}
\newcommand{\bd}{\boldsymbol{d}}
\renewcommand{\bf}{\bm{f}}
\newcommand{\bg}{\boldsymbol{g}}

\newcommand{\br}{\boldsymbol{r}}
\newcommand{\bw}{\boldsymbol{w}}

\newcommand{\bu}{\boldsymbol{u}}

\newcommand{\bx}{\boldsymbol{x}}
\newcommand{\by}{\boldsymbol{y}}
\newcommand{\bell}{\boldsymbol{\ell}}

\newcommand{\bDelta}{\boldsymbol{\Delta}}

\newcommand{\diag}{\mathrm{diag}}

\newcommand{\IncrPNorm}{\textsc{IncrementalPNorm}}
\newcommand{\IncrMWU}{\textsc{IncrementalMWU}}

\renewcommand{\hat}{\widehat}
\renewcommand{\tilde}{\widetilde}

\DeclareFontFamily{U}{mathb}{\hyphenchar\font45}
\DeclareFontShape{U}{mathb}{m}{n}{<5> <6> <7> <8> <9> <10> gen * mathb
<10.95> mathb10 <12> <14.4> <17.28> <20.74> <24.88> mathb12}{}
\DeclareSymbolFont{mathb}{U}{mathb}{m}{n}
\DeclareMathSymbol{\rcirclearrow}{\mathbin}{mathb}{'367}

\newcommand{\wt}{\widetilde}
\newcommand{\wh}{\widehat}

\foreach \x in {A,...,Z}{%
	\expandafter\xdef\csname m\x\endcsname{\noexpand\mathbf{\x}}
}

\foreach \x in {A,...,Z}{%
	\expandafter\xdef\csname c\x\endcsname{\noexpand\mathcal{\x}}
}

\newif\ifrandom
\randomtrue

\newcommand{\defeq}{\stackrel{\mathrm{\scriptscriptstyle def}}{=}}

\newcommand{\poly}{{\mathrm{poly}}}

\newcommand{\bb}{\boldsymbol{b}}
\newcommand{\ba}{\boldsymbol{a}}

\renewcommand{\l}{\langle}
\renewcommand{\r}{\rangle}

\newcommand{\norm}[1]{\left\lVert#1\right\rVert}
\newcommand{\Abs}[1]{\left|#1\right|}

\newcommand{\todolater}[1]{}

\DeclareUnicodeCharacter{2113}{$\ell$}

\interfootnotelinepenalty=10000

\usepackage[backend=biber, isbn=false, style=alphabetic, backref=true,
doi=false, url=false, maxcitenames=10, mincitenames=5,
maxalphanames=10, maxbibnames=10, minbibnames=5, minalphanames=3,
defernumbers=true, sortlocale=en_US, sorting=ynt, sortcites]{biblatex}
\AtBeginBibliography{\small}
\addbibresource{refs.bib}

\begin{document}
\pagenumbering{gobble}
\title{Incremental Approximate Maximum Flow on\\
Undirected Graphs in Subpolynomial Update Time}

\author{
Jan van den Brand\\ Georgia Tech\\ vdbrand@gatech.edu
\and
Li Chen\thanks{Li Chen was supported by NSF Grant CCF-2106444.}\\ Georgia Tech\\ lichen@gatech.edu
\and
Rasmus Kyng\thanks{The research leading to these results has received funding from the grant ``Algorithms and complexity for high-accuracy flows and convex optimization'' (no. 200021 204787) of the Swiss National Science Foundation.}\\ ETH Zurich \\ kyng@inf.ethz.ch 
\and
Yang P. Liu\thanks{Yang P. Liu was supported by NSF CAREER Award CCF-1844855, NSF Grant CCF-1955039, and a Google Research Fellowship.} \\ Stanford University \\ yangpliu@stanford.edu
\and
Richard Peng\thanks{Richard Peng was partially supported by the Natural Sciences and Engineering Research Council of Canada (NSERC) Discovery Grant RGPIN-2022-03207} \\ University of Waterloo \\
y5peng@uwaterloo.ca
\and
Maximilian Probst Gutenberg\footnotemark[2]\\ ETH Zurich\\ maxprobst@ethz.ch
\and
Sushant Sachdeva\thanks{Sushant Sachdeva was supported by an NSERC Discovery Grant RGPIN-2018-06398, an Ontario Early Researcher Award (ERA) ER21-16-284, and a Sloan Research Fellowship.
} \\ University of Toronto \\ sachdeva@cs.toronto.edu
\and 
Aaron Sidford\thanks{Aaron Sidford was supported by a Microsoft Research Faculty Fellowship, NSF CAREER Award CCF-1844855, NSF Grant CCF-1955039, a PayPal research award, and a Sloan Research Fellowship.} \\
Stanford University\\
sidford@stanford.edu
}
\maketitle

\begin{abstract}

We provide an algorithm which, with high probability, maintains a $(1-\eps)$-approximate maximum flow on an undirected graph undergoing $m$-edge additions in amortized $m^{o(1)} \eps^{-3}$ time per update. To obtain this result, we provide a more general algorithm that solves what we call the \emph{incremental, thresholded, $p$-norm flow problem} that asks to determine the first edge-insertion in an undirected graph that causes the minimum $\ell_p$-norm flow to decrease below a given threshold in value. Since we solve this thresholded problem, our data structure succeeds against an adaptive adversary that can only see the data structure's output. Furthermore, since our algorithm holds for $p = 2$, we obtain improved algorithms for dynamically maintaining the effective resistance between a pair of vertices in an undirected graph undergoing edge insertions.

Our algorithm builds upon previous dynamic algorithms for approximately solving the minimum-ratio cycle problem that underlie previous advances on the maximum flow problem [Chen-Kyng-Liu-Peng-Probst Gutenberg-Sachdeva, FOCS '22] as well as recent dynamic maximum flow algorithms [v.d.Brand-Liu-Sidford, STOC '23]. Instead of using interior point methods, which were a key component of these recent advances, our algorithm uses an optimization method based on $\ell_p$-norm iterative refinement and the multiplicative weight update method. This ensures a monotonicity property in the minimum-ratio cycle subproblems that allows us to apply known data structures and bypass issues arising from adaptive queries.
\end{abstract}

\newpage
\setcounter{tocdepth}{2}
\tableofcontents

\normalsize
\pagebreak
\pagenumbering{arabic}

\section{Introduction}

The design and analysis of dynamic graph algorithms is a rich, well-studied research area.
Researchers have studied dynamic variations of fundamental graph problems such as minimum spanning tree, shortest path,
and bipartite matching.
Recent progress on dynamic matching has been widely celebrated \cite{B23,BKSW23}, and dynamic graph algorithms play a key role in recent advances in static graph algorithms, 
yielding the first nearly-linear time algorithms for bipartite matching and maximum flow (maxflow) in dense graphs \cite{BLNPSSSW20,BLLSSSW21} and almost-linear time algorithm for maxflow \cite{CKLPPS22}.
(See \Cref{sec:related}, for additional related work.)

Despite the many successes of dynamic graph algorithms, \emph{dynamic maxflow},
another class of core problems, has seen relatively little progress. This is perhaps due in part to strong conditional hardness results. In the incremental and decremental settings, where, respectively, edges are only added or only removed,
exactly maintaining the $s$-$t$ flow value requires $\Omega(n)$ time per update, even in directed unit capacity graphs, assuming the Online-Matrix-vector (OMv)-conjecture.
This was shown by \cite{Dahlgaard16}, building on earlier work by \cite{HenzingerKNS15} that introduced the OMv-conjecture and showed a $O(\sqrt{m})$ time per update lower bound for the same problems.

Moving to approximate solutions opens the possibility of faster algorithms.
Recently, \cite{GH23} gave an
$(1-\epsilon)$-approximation algorithm for incremental unit capacity maxflow with amortized update time $\hat{O}(\epsilon^{-1/2} \sqrt{m})$\footnote{In this paper, we use $\wh{O}(\cdot)$ to suppress subpolynomial $n^{o(1)}$ factors.}, an improvement for sparse graphs in the low accuracy regime. Additionally, \cite{BLS23} recently
gave an $(1-\epsilon)$-approximation algorithm for incremental maxflow and minimum-cost flow\footnote{In both cases, this is assuming polynomially bounded capacities; we make this assumption throughout the paper.} with amortized update time $\hat{O}(\epsilon^{-1} \sqrt{n})$ by dynamizing and building upon the recent almost-linear time algorithm for maxflow of \cite{CKLPPS22}.
For constant $\epsilon$, this runtime is faster than the conditional lower bound for exact incremental maxflow.

In this work, we ask whether \emph{we can build upon this recent progress and give a subpolynomial amortized update time algorithm for the dynamic maxflow problem?} We answer this in the affirmative by developing such an algorithm for approximate incremental undirected maxflow.

\paragraph{Dynamizing Static Maxflow.} 
To obtain our result, we build upon the recent advance of \cite{BLS23} and the work that underlies it, as well as distinct lines of research related to approximate undirected maxflow.
 \cite{BLS23} essentially dynamizes the almost linear-time algorithm for maxflow \cite{CKLPPS22}.
\cite{CKLPPS22} came at the end of a long line of research that focused on solving flow problems by combining graph theoretic tools with interior point methods (IPMs), a class of continuous optimization methods which obtain high-accuracy solutions to convex optimization problems
\cite{DS08,M13,LS14,M16,CMSV17,LS20,KLS20,AMV20,BLNPSSSW20,BLLSSSW21,GLP21,AMV21,DGGLPSY22,BGJLLPS22,CKLPPS22}.
The \cite{CKLPPS22} IPM relies on solving an $\ell_1$ flow update problem known as ``(undirected) min-ratio cycle.''\footnote{In this paper we refer to this problem as \emph{min-ratio cycle}, omitting the term ``undirected.'' 
Elsewhere in the literature, but never in this paper, \emph{min-ratio cycle} may refer to a variant with edge-direction constraints.
}
This problem is solved $m^{1+o(1)}$ times using a data structure with amortized $m^{o(1)}$ time per update.
\cite{BLS23}, showed how to dynamize this IPM, but obtained an $n^{1/2+o(1)}$ amortized time per update due to the cost of adapting the min-ratio cycle data structures to this setting.

There are two key ideas in \cite{BLS23}. The first is that the $\ell_1$-IPM of \cite{CKLPPS22} can be naturally extended to the incremental setting. 
In particular, they showed how the IPM can be used to solve a threshold variant of the incremental maximum flow problem, i.e., detecting the first update that causes the maximum flow value to increase above a threshold. However, this extension creates a challenge: The \cite{CKLPPS22} data structure for min-ratio cycle does \emph{not} work against an adaptive adversary. Instead, \cite{CKLPPS22} crucially leverages stability properties of their $\ell_1$ IPM that ultimately determines the update problems, and uses these stability properties to guarantee that while their data structure may occasionally fail, this will occur infrequently.
In the incremental setting, \cite{BLS23} cannot leverage this guarantee to ensure their data structure works.
This leads to the second central idea of \cite{BLS23}.
They develop a new version of the \cite{CKLPPS22} data structure that works against adaptive adversaries, at the expense of increasing the amortized time per update from $m^{o(1)}$ to $n^{1/2+o(1)}$.

At a high level, our approach is motivated by \cite{BLS23}, but ultimately we develop a fundamentally different optimization approach, which yields more tractable update problems. A key observation enabling our algorithms, is that the min-ratio cycle problem succeeds against adaptive adversaries provided that there is a particular monotonicity in the updates. We change the optimization framework to yield subproblems which can be solved by such monotonic updates. Because of this, we manage to show that in our setting, the data structure of \cite{CKLPPS22} can directly solve the update problems in the incremental setting, with only $m^{o(1)}$ amortized time per update.

It is worth mentioning that the authors of this work recently gave a deterministic min-cost flow algorithm \cite{detMaxFlow}. However, \cite{detMaxFlow} still uses a version of the data structure of \cite{CKLPPS22} that does not work against adaptive adversaries, and critically still uses stability properties of the update sequence to argue correctness.

\paragraph{$\ell_p$-norm Flow and Approximate Undirected Maxflow.}
To leverage that the min-ratio cycle data structure succeeds against monotonic adversaries, we turn to an approach motivated by lines of research on 
algorithms for (static) $\ell_p$-norm flow and approximate undirected maxflow.
One important line of research yielded undirected approximate maxflow in  $\tilde{O}(\epsilon^{-1}m)$ time~\cite{CKMST11, S13, KLOS14, P16, S17}.
This sequence of works combined first-order continuous optimization methods with graph theoretic tools.
A second line of work focused on a broader class of flow problems, namely $\ell_p$-norm flows~\cite{AKPS19, AS20, APS19, KPSW19, ABKS21},
and developed iterative optimization methods tailored to $\ell_p$-norm objectives.
The problem of $\ell_p$-norm flows asks for a flow $\bf$ that routes a given demand and minimizes its $\ell_p$-norm, $\|\bf\|_p.$
This problem is maxflow for $p = \infty$ and setting $p = O(\eps^{-1} \log m)$ yields $(1-\eps)$-approximate undirected maxflow.

\paragraph{Our Approach to Incremental Flow Problems.}  We show that the desired monotonicity properties for the subproblems can be achieved in the setting of $\ell_p$-norm flows by adapting the optimization framework. Even though $\ell_p$-norm flow is less general than directed maxflow, it has interesting consequences including approximate undirected maxflow and effective resistances~\cite{CKMST11,SS08}.

Switching to $\ell_p$-norm flows allows us to use the $\ell_p$ iterative refinement framework developed in \cite{AKPS19, AS20, APS19}, and study the smoothed $\ell_p$-norm flow problems of \cite{KPSW19}.
The iterative refinement framework shows that smoothed $\ell_p$-norm flow computation can be accomplished by a small number of iterations of a \emph{refinement step}. In our context $\hat{O}(p)$ steps suffice.

To solve each refinement step problem, we use a $\ell_1$ multiplicative weight update method (MWU). 
Our method lets us solve smoothed $\ell_p$-norm flow to $m^{o(1)}$ accuracy by solving a min-ratio cycle problem roughly $m^{1+o(1)}$ times.
Crucially, we show that our MWU induces a \emph{monotonicity} property in our min-ratio cycle problems.
Concretely, our sequence of approximate min-ratio cycle problems only change by (1) edge insertions and (2) edge lengths increases (see \cref{prob:monoMRC}).
In this way, we create a more tractable data structure problem than those in \cite{BLS23}, and we show that this problem can be solved using data structures from \cite{CKLPPS22} with $m^{o(1)}$ amortized update time.

Finally, combining our monotonic $\ell_1$-MWU for computing refinement steps with iterative refinement, we obtain an incremental algorithm for (a decision version of) smoothed $\ell_p$-norm flows.
From this, we derive an algorithm for approximate incremental undirected maxflow and for incremental electrical flow.
Using the incremental electrical flow algorithm, for a fixed pair of vertices $s,t$, we can detect in an incremental graph the first time the effective resistance between $s$ and $t$ drops below a given threshold. 

If we were only focusing on approximate incremental maxflow instead of the more general $\ell_p$-norm flows,
a similar monotonic data structure problem could also be obtained by using an $\ell_1$-oracle MWU to compute each update step of a first-order optimization method such as one used by \cite{S13, KLOS14, P16, S17}.
Wrapping this inside a first-order $\ell_{\infty}$ optimization approach would then yield a similar algorithm for approximate incremental maxflow.

\subsection{Results}
\label{sec:results}

Here we present the main results of the paper. This section leverages a variety of notation, in particular graph theory conventions, all provided later in \cref{sec:prelim}.

A central result of this paper is an algorithm with subpolynomial update time for the following \emph{undirected incremental approximate maxflow problem }(which we abbreviate as \emph{incremental maxflow} in the remainder of the paper). Incremental maxflow is the dynamic data structure problem of maintaining a $(1+\epsilon)$-(multiplicative) approximate maximum flow in an undirected graph undergoing edge additions (hence the term incremental). 

\begin{restatable}[Undirected Incremental Approximate Maxflow Problem]{prob}{dynmaxflow}
\label{prob:dynmaxflow}
In the \emph{undirected incremental approximate maxflow problem (incremental maxflow)} 
we are given a finite set of $n$ vertices $V$, a distinct pair of elements $s$ and $t$, and a parameter $\epsilon > 0$. There are then a sequence of $m \le \poly(n)$ edge insertions where starting from $E = \emptyset$ an undirected edge $e$ is added to $E$ with integral capacity $\bu_e \in [1,U]$. The algorithm must maintain a $(1+\epsilon)$-approximate maximum flow in the capacitated graph $G = (V,E,\bu)$ before and after each edge addition. 
\end{restatable}

The main result of this paper is a randomized algorithm for the incremental maxflow problem that succeeds with high probability in $n$ (whp.) and implements each update in amortized $n^{o(1)} \epsilon^{-3}$ time. This is the first subpolynomial update time for an incremental maxflow problem which achieves even constant approximation.

Before stating our result, we comment on the adversary model. In our algorithms, we assume that the adversary can see the output flow, but not the internal randomness or information stored in the data structure. We call this an \emph{adaptive adversary}. We refer to the stronger adversary which can also see the internal randomness as a \emph{non-oblivious adversary}.

\begin{restatable}[Incremental Maxflow]{theorem}{incmflow}
\label{thm:incmaxflow}
There is an algorithm which solves incremental maxflow (\Cref{prob:dynmaxflow}) whp.\ in amortized $n^{o(1)} \epsilon^{-3}$ time per update against adaptive adversaries.
\end{restatable}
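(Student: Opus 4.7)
The plan is to reduce approximate incremental maxflow to an incremental, thresholded $\ell_p$-norm flow problem for $p = \Theta(\epsilon^{-1}\log m)$, since it is well-known that an $\ell_p$-norm-minimizing flow routing a demand for such $p$ yields a $(1+\epsilon)$-approximate undirected maxflow when one converts from a value question to a norm question (and vice versa). Because the maxflow value is monotonically non-decreasing under edge additions, I would maintain $O(\epsilon^{-1}\log(mU))$ parallel instances of the thresholded $\ell_p$-norm flow data structure, one for each multiplicative value threshold $F_i = (1+\epsilon)^i$; each instance detects the first edge insertion causing the optimal $p$-norm flow at value $F_i$ to drop below a target norm, at which point the reported approximate maxflow value is bumped up. Only the currently active threshold needs to do work at any given update, and since the thresholds activate in a fixed monotone order, their total work amortizes naturally against the $m$ updates.

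To solve each thresholded incremental $\ell_p$-norm flow subproblem I would invoke the smoothed $\ell_p$ iterative refinement framework of \cite{AKPS19,AS20,APS19,KPSW19}, which reduces the smoothed-$\ell_p$ objective to $\hat{O}(p)$ refinement steps, each being a small-residual convex problem. Each refinement step is then solved by an $\ell_1$-oracle multiplicative weight update (MWU): the MWU maintains edge widths/lengths that are updated monotonically after each oracle call, and in each iteration it asks an oracle for an approximate min-ratio cycle with respect to the current gradient and lengths. One chooses the width/length update rule precisely so that whenever MWU modifies an edge, its length can only \emph{increase}; combined with the fact that the underlying graph is only undergoing \emph{insertions}, this gives exactly the monotonic min-ratio cycle problem (\Cref{prob:monoMRC}) promised in the introduction. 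The MWU convergence bounds (polynomial in $\epsilon^{-1}$ and subpolynomial in $m$) then control the total number of oracle calls across the whole update sequence for a single threshold by $m^{1+o(1)}\epsilon^{-O(1)}$.

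Each approximate min-ratio cycle query is answered by the \cite{CKLPPS22} data structure. The crucial point is that this data structure \emph{does} succeed against an adaptive adversary provided the update stream to it is monotone in the required sense, because the stability/oblivious-adversary issues that forced \cite{BLS23} to build a new, more expensive data structure disappear when one is only ever inserting edges and raising lengths. I would thus plug the \cite{CKLPPS22} data structure directly into the MWU loop, paying $m^{o(1)}$ amortized time per min-ratio cycle oracle call. Combining (i) $O(\epsilon^{-1}\log(mU))$ thresholds, (ii) $\hat{O}(p) = \hat{O}(\epsilon^{-1})$ refinement steps per instance, (iii) $m^{1+o(1)}\epsilon^{-O(1)}$ total MWU iterations across the incremental sequence within one refinement step, and (iv) $m^{o(1)}$ per min-ratio cycle call, gives total work $m^{1+o(1)}\epsilon^{-3}$ and therefore amortized $n^{o(1)}\epsilon^{-3}$ per update, as claimed.

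The main obstacle, and the step I would spend the most care on, is proving the monotonicity claim in the MWU: one must design the width/length update so that across \emph{all} iterations within \emph{all} refinement steps within \emph{all} thresholds, and in the presence of the external edge insertions, the sequence of min-ratio cycle instances fed to the data structure consists only of edge insertions and length increases, while still inheriting the $\ell_1$-MWU convergence guarantee needed to solve the refinement problem to the accuracy demanded by iterative refinement. A secondary subtlety is the interaction between the external adversarial insertion (which adds an edge, potentially changing the optimal flow discontinuously) and the internal MWU state; handling this likely requires reinitializing or conservatively extending the current primal/dual state in a way that preserves monotonicity. Once these invariants are established, correctness against adaptive adversaries follows because the output revealed to the adversary is only the (approximate) flow value/witness, which determines the insertion order but not the internal randomness of the min-ratio cycle data structure, and the data structure's guarantees hold deterministically on monotone input streams.
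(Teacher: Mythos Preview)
Your proposal is correct and takes essentially the same approach as the paper: reduce to the thresholded $\ell_p$-norm flow problem with $p = \Theta(\eps^{-1}\log m)$, solve that via $\ell_p$ iterative refinement into residual subproblems, solve each residual via an $\ell_1$-MWU whose length updates are monotone, and feed the resulting monotone min-ratio cycle instances to the \cite{CKLPPS22} data structure. Two small corrections: the paper re-initializes the MRC data structure at the start of each refinement step, so monotonicity only needs to hold within a single MWU run (not across refinement steps or thresholds as you suggest, which would be harder and is unnecessary), and the guarantee of the \cite{CKLPPS22} data structure on monotone inputs is still randomized whp, not deterministic.
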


To obtain this result, we develop dynamic algorithms for the problem of computing smoothed $\ell_p$-norm flows \cite{KPSW19} for $p \geq 2$. The \emph{smoothed $\ell_p$-norm flow problem} asks to find a flow routing given vertex demands while minimizing a linear
plus quadratic plus $p$\textsuperscript{th} power objective on the flow. This problem generalizes both the popular and prevalent problems of computing electric flows (and therefore solving Laplacian systems)~\cite{ST04,KMP11,CKMPPRX14,JS21} as well as computing approximate maximum flows on undirected graphs~\cite{CKMST11, S13, KLOS14, P16, S17}.

\begin{restatable}[Smoothed $\ell_p$-norm flow]{prob}{pnormFlow}
\label{prob:pnormflow}
Given an undirected graph $G = (V, E)$, gradient vector $\bg^G \in \R^E$, edge resistances and weights $\br^G, \bw^G \in \R^E_+$, and demand vector $\bd \in \R^V$ the \emph{smoothed $\ell_p$-norm flow problem} asks to solve the following optimization problem
\begin{align}\label{eq:pnormflow}
    OPT = \min_{\mB^\top \bf = \bd} \cE(\bf) \enspace \text{ for } \enspace \cE(\bf) \defeq \l\bg^G, \bf\r + \norm{\mR^G \bf}_2^2 + \norm{\mW^G \bf}_p^p \,.
\end{align}
$f \in \R^E$ is said to be \emph{feasible} or \emph{routes the demands} if  $\mB^\top \bf = \bd$, $\cE(\bf)$ is called the \emph{energy} or \emph{smoothed objective value} of $\bf$, and a solution to \eqref{eq:pnormflow} is called a \emph{smoothed $\ell_p$-norm flow}.
\end{restatable}
For short, we refer to smoothed $\ell_p$-norm flows as simply \emph{$\ell_p$-norm flows} throughout.
Throughout we also assume that there is a feasible flow $\bf^{(0)}$ on the initial graph. This can be ensured by determining the first instance that $\bd$ is feasible using a simple union-data data structure.
This incurs only an additive $\O(1)$ cost in our data structures.

Our main technical result is a high-accuracy algorithm with subpolynomial update time for the following \emph{incremental thresholded $\ell_p$-norm flow} problem, or \emph{incremental $\ell_p$-norm flow} for short.
The incremental $\ell_p$-norm flow data structure detects the earliest moment when the optimal value to \eqref{eq:pnormflow} drops below a given threshold $F.$

\begin{restatable}[Incremental Thresholded $\ell_p$-Norm Flow]{prob}{incrPnormflow}
\label{prob:incrPnormflow}
Consider a dynamic instance of a $\ell_p$-Norm Flow $(G, \bg^G, \br^G, \bw^G, \bd)$ that is subject to edge insertion.
Let $m$ be the final number of edges in $G.$
Given an objective threshold $F \in \R$, and an error parameter $\eps > 0$, the problem of \emph{Incremental Threshold $\ell_p$-Norm Flow} asks, after the initialization or each edge insertion, to either
\begin{enumerate}
\item correctly certify that $OPT > F$, or
\item output a feasible flow $\bf$ with $\cE(\bf) \le F + \eps.$
\end{enumerate}
\end{restatable}

Combining an $\ell_1$-MWU with the dynamic min-ratio data structure of \cite{CKLPPS22}, we can solve \cref{prob:incrPnormflow} in almost linear time whp.\ against an adaptive adversary.

\begin{restatable}{theorem}{thmIncrPnormflow}\label{thm:incrPnormflow}
There is a randomized algorithm for \cref{prob:incrPnormflow} that given an initial flow $\bf^{(0)}$ and inputs $\eps,\bg^G, \br^G, \bw^G$ and $\bd$ with sizes bounded by $\O(1)$ in fixed point arithmetic., runs in $p^2 m^{1+o(1)} \log(\frac{\cE(\bf^{(0)}) - F}{\eps})$ time and succeeds whp.\ against adaptive adversaries.
\end{restatable}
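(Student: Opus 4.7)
The plan is to combine three ingredients: the $\ell_p$-iterative refinement framework of \cite{AKPS19, AS20, APS19, KPSW19} as an outer loop, an $\ell_1$-multiplicative weight update (MWU) as a middle loop for each refinement subproblem, and the incremental min-ratio cycle data structure of \cite{CKLPPS22} as the inner primitive. Iterative refinement reduces \cref{prob:incrPnormflow} to $\wt O(p \log((\cE(\bf^{(0)}) - F)/\eps))$ residual smoothed $\ell_p$-norm subproblems, each of which only needs to be solved to $\wt O(1)$ approximation in order for the outer loop to make constant-factor progress on $\cE(\bf) - OPT$.

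Each residual subproblem is then solved by an $\ell_1$-MWU. The MWU maintains an edge weighting $\bw$ that tracks cumulative edge congestion. Given the current residual gradient $\bg$, it queries the data structure for an approximate min-ratio cycle $\bDelta$ minimizing $\l \bg, \bDelta \r$ subject to $\|\mW \bDelta\|_1 \le 1$, takes a step along $\bDelta$, and then \emph{increases} $\bw_e$ on every congested edge. After $\wt O(p) \cdot m^{1+o(1)}$ such iterations a standard potential argument certifies either an $\wt O(1)$-approximate solution to the residual problem or a dual lower bound on its optimum that, combined with the refinement's duality lens, suffices to certify $OPT > F$ for the outer thresholding problem. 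Crucially, because weights only grow and $G$ is itself incremental, the sequence of min-ratio cycle queries the MWU feeds to the data structure matches exactly the monotone update regime of \cref{prob:monoMRC}.

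On this monotone sequence, the data structure of \cite{CKLPPS22} can be invoked essentially as a black box with $m^{o(1)}$ amortized update time. Monotonicity is what buys correctness against an adaptive adversary: in the original IPM-based analyses, the data structure could fail whenever the update sequence deviated from the stable trajectory induced by the IPM, a deviation that \cite{BLS23} had to suppress by paying $n^{1/2+o(1)}$ overhead. Here, because edge insertions and length increases are the only modifications, the data structure's internal components (low-stretch subgraphs, core graphs, branching sequences) are only refined or rebuilt as the instance grows, and no query outcome the adversary sees can steer the instance away from a trajectory the data structure handles. Multiplying the three iteration counts then yields the claimed $p^2 m^{1+o(1)} \log((\cE(\bf^{(0)}) - F)/\eps)$ runtime, and the thresholding behaviour of \cref{prob:incrPnormflow} is handled by halting after each edge insertion as soon as either a feasible $\bf$ with $\cE(\bf) \le F + \eps$ is produced or the MWU/refinement potentials certify $OPT > F$.

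The main obstacle is to design the $\ell_1$-MWU so that the increase-only weight update still gives the claimed convergence rate: once an edge's weight saturates, the MWU can no longer force the min-ratio cycle oracle away from it, so one needs a weight-growth schedule together with periodic rescaling or restart epochs --- themselves charged amortized against MWU progress --- that never shrink any weight within an epoch, yet guarantee the potential argument closes. A secondary technical point is auditing that the randomised components of the \cite{CKLPPS22} data structure remain sound under adversaries that see only the output flow on a monotone instance; this reduces to verifying that each piece of randomness the data structure uses is effectively rerandomised by future insertions or length-increases before it can be exploited by the adversary.
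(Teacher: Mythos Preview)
Your high-level architecture matches the paper exactly: iterative refinement as the outer loop (producing $O(pK^2\log((\cE(\bf^{(0)})-F)/\eps))$ residual subproblems), an $\ell_1$-MWU as the middle loop, and the \cite{CKLPPS22} min-ratio cycle data structure as the inner primitive, with weight monotonicity in the MWU being the key that unlocks adaptive correctness. The thresholding logic you describe is also correct.

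However, your final paragraph misidentifies both technical obstacles, and the fixes you propose would actually break the argument.

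First, the MWU needs no rescaling or restart epochs, and introducing them would destroy the monotonicity you depend on. The paper's MWU simply tracks two potentials $\Phi=\|\mR\ba\|_2^2$ and $\Psi=\|\mW\bb\|_q^q$, where $\ba,\bb$ are the cumulative absolute updates plus a small initial padding $Km^{-1/2}\br^{-1}$ and $Km^{-1/q}\bw^{-1}$. A direct calculation shows that each MWU step increases $\Phi$ by at most $3K^2/T$ and $\Psi$ by at most $4qK^q/T$, so after $T=100qm$ steps both are still $O(K^2)$ and $O(qK^q)$, bounding the output's $\ell_2$ and $\ell_p$ norms. There is no saturation issue: weights are unbounded above, and the min-ratio oracle automatically avoids heavy edges. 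The only ``restart'' is that the entire MWU (and its data structure) is reinitialised at each outer iterative-refinement step; within a single MWU run lengths only grow.

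Second, adaptive correctness of the \cite{CKLPPS22} data structure is \emph{not} established by any rerandomisation argument. The paper instead verifies that the monotone update sequence satisfies the \emph{hidden stable-flow chasing} property (their Definition~6.1). Because the residual gradient $\bg$ is fixed throughout one MWU run, one can take the hidden circulation to be the fixed witness $\bc^*$ (or $\mathbf 0$ before its support is fully inserted) and the hidden widths to be $\bw^{(t)}_e=|\bell^{(t)}_e\bc^*_e|$. Monotone lengths make $\|\bw^{(t)}\|_1$ non-decreasing and ensure widths never drop on surviving edges; this is exactly the structural hypothesis under which \cite[Theorem~7.1]{CKLPPS22} applies without the rebuild terms that caused trouble in \cite{BLS23}. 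No auditing of randomness against the adversary is needed beyond this.
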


As a result, taking $p=2$ yields an algorithm for $(1+\eps)$-approximate incremental electrical flow with subpolynomial update time.
This is the first subpolynomial time algorithm for constant accuracy incremental electrical flows.

We show how \cref{thm:incrPnormflow} leads to the incremental maxflow algorithm.
\begin{proof}[Proof of \Cref{thm:incmaxflow}]
The algorithm proceeds in about $\O(\eps^{-1})$ phases. In each phase, we determine when the congestion of the optimal flow has decreased by at least a $(1-\eps)$ factor. At the start of such a phase, we find the optimal maxflow $\bf$ using the almost-linear time algorithm of \cite{CKLPPS22}.
Let $C$ be the congestion of $\bf$.
We wish to determine when the optimal congestion is smaller than $e^{-\eps}C.$
From the previous discussion, any flow of congestion less than $e^{-\eps}C$ must have its $\ell_p$-norm at most $m (e^{-\eps}C)^p.$
Our flow $\bf$, on the other hand, has $p$-norm at most $mC^p.$
So we apply \cref{thm:incrPnormflow} with threshold $F = m (e^{-\eps}C)^p$ and error $m (e^{-\eps}C)^p$ as well.
When the data structure certifies that every flow has $\ell_p$-norm at least $m (e^{-\eps}C)^p$, we know that $\bf$ is still a $(1-\eps)$-approximate maxflow.
When the data structure outputs a new feasible flow $\bf'$ s.t. $\cE(\bf') \le F + m (e^{-\eps}C)^p = 2m (e^{-\eps}C)^p$, we know the congestion of $\bf'$ is at most $(2m (e^{-\eps}C)^p)^{1/p} \le e^{-\eps/2} C.$
This means that the optimal congestion drops at least by a factor of $(1+\eps/2)$ and we restart the whole algorithm with a newly computed maximum flow.
It restarts at most $\O(1/\eps)$ times because the congestion is between $[\exp(-\O(1)), \exp(\O(1))].$
\end{proof}

\paragraph{Bit Complexity.}
The number of exact arithmetic operations performed in the algorithm for \cref{thm:incrPnormflow} is only $p m^{1+o(1)} \log(\frac{\cE(\bf^{(0)}) - F}{\eps})$.
The additional $p$ dependency arises as our algorithms manipulate numbers of size $\O(p)$ due to the $p$-th power taken in the objective \eqref{eq:pnormflow} and we are using fixed point arithmetic.
A potential way to improve the dependency on $p$ is via floating point arithmetic and crude approximations.
That is, during the course of the algorithm for \cref{thm:incrPnormflow}, we only use $(1+\exp(-\O(1)))$-multiplicative approximation to the numbers we encountered and use an additional $\O(p)$ bits to represent their scales.
This way, the numbers encountered would be of size $\O(1)$ and we could shave one factor of $p$ from the runtime in \cref{thm:incrPnormflow}.
This could also improve the amortized update time for incremental approximate maximum flow to $n^{o(1)} \eps^{-2}.$

\subsection{Additional Related Work}
\label{sec:related}

\paragraph{Dynamic Flows and Matching.}
Exact dynamic maxflow on unit capacity graphs can be maintained in $O(m)$ time per update by performing one augmentation per update, or in $O(n)$ amortized time in the incremental setting
\cite{GK21,KumarG03}. 
On planar graphs, it can be maintained in the fully dynamic setting in $\O(n^{2/3})$ update and query time
\cite{ItalianoNSW11} and in the incremental setting with $\O(\sqrt{n})$ update and query time \cite{das2022near}.
Beyond that, there is more recent work on the approximate setting, with $(1-\epsilon)$-approximate incremental algorithms \cite{GH23,BLS23}, as discussed earlier.
Finally, in the fully dynamic setting, algorithms with super-constant (i.e.~polylog or subpolynomial) approximation ratios
and sublinear amortized update time \cite{CGHPS20} and for uncapacitated graphs with subpolynomial worst-case update time \cite{GRST21} are known.

Dynamic bipartite matching (which is a special case of directed maxflow) has also received significant attention in the approximate setting \cite{GuptaP13,Gupta14,BosekLSZ14,BernsteinS15,BernsteinS16,BaswanaGS15,PelegS16,Solomon16,BhattacharyaHN16,ArarCCSW18,CharikarS18,BernsteinFH21,BernsteinHR19,BhattacharyaK19,BehnezhadDHSS19,ChechikZ19,Wajc20,bernstein2020deterministic,BehnezhadLM20,BhattacharyaK21,Kiss21,BehnezhadK22,LeMSW22,GrandoniSSU22,RoghaniSW22,AssadiBKL22,BKSW23,BKS23,BlikstadK23}.
In the $(1-\epsilon)$-approximate regime, 
the current state-of-the-art is $\poly(1/\eps)$ update time for the incremental setting \cite{BlikstadK23}, 
$\poly(\log(n)/\epsilon)$ for the decremental setting \cite{BKS23mwu}, 
and $O(\sqrt{m}^{1-\Omega_{\eps}(1)})$ for the fully dynamic setting~\cite{BKS23}. 
For fully dynamic exact bipartite matching, the fastest update time is $O(n^{1.406})$ \cite{Sankowski07,BrandNS19}.

Finally, it is worth mentioning that recent works have leveraged numerical optimization methods based on entropy-regularized optimal transport and MWU to design dynamic algorithms for partially dynamic bipartite matching and positive linear programs \cite{JJST22,BKS23mwu}.

\paragraph{Edge Connectivity.}
The $k$-edge connectivity between two vertices $s,t$ can be seen as a maxflow of value up to $k$. 
Dynamic $k$-edge $st$-connectivity has been studied for small constant values of $k\le5$ \cite{GalilI91,GalilIb91,Frederickson91,WestbrookT92,DinitzV94,DinitzV95,HenzingerK97,EppsteinGIN97,DinitzW98,Thorup00,HolmLT01,HolmRT18}. For super-constant $k$, \cite{JinS21} presents a fully dynamic algorithm with $n^{o(1)}$ update time for $k=(\log n)^{o(1)}$.
\cite{ChalermsookDL+20} give an offline fully dynamic algorithm with $\widehat{O}(k^{O(k)})$ query time. These results all require small $k$ to be efficient, whereas our result has no such restrictions. We also point out the work in  \cite{thorup2007fully} which gives an algorithm to dynamically maintain the (value of) the global min-cut with $\tilde{O}(\sqrt{n})$ worst-case update time.

\paragraph{Dynamic Electric Flows.}
For $p=2$, our incremental $\ell_p$-norm flow algorithm can maintain a $(1-\epsilon)$-approximate electric flow between two fixed vertices $s,t\in V$ subject to edge insertions.
Dynamic electric flows have previously been studied in \cite{GoranciHP17a,DurfeeGGP19}. Such dynamic electric flow algorithms were also studied for the purpose of accelerating static maxflow and mincost flow algorithms \cite{GLP21,BGJLLPS22}.
The closely related concept of dynamic effective resistances has also been studied in the online dynamic \cite{CGHPS20} and offline dynamic setting \cite{LiPYZ20}.

\subsection{Paper Organization} 
In the remainder of the paper, we provide preliminaries in \cref{sec:prelim} and then give a more technical overview of our approach in \cref{sec:overview}.
We then we prove our main result, \cref{thm:incrPnormflow}, via iterative refinement in \cref{sec:incrIterRefine}. The incremental algorithm for the $\ell_p$-norm residual problem is then presented in \cref{sec:incrMWU}.
Finally, we argue that the approximate data structure \cite{CKLPPS22} can be used to implement the incremental multiplicative weight method in \cref{sec:monoMRC}.

\section{Preliminaries}
\label{sec:prelim}

In this section, we introduce notations we use throughout the paper.

\paragraph{General notation.} We denote vectors by boldface lowercase letters. We use uppercase boldface to denote matrices. Often, we use uppercase matrices to denote the diagonal matrices corresponding to lowercase vectors, such as $\mL = \diag(\bell)$. For vectors $\bx, \by$ we define the vector $\bx \circ \by$ as the entrywise product, i.e., $(\bx \circ \by)_i = \bx_i\by_i$. We also define the entrywise absolute value of a vector $|\bx|$ as $|\bx|_i = |\bx_i|$. We use $\l \cdot, \cdot \r$ as the vector inner product: $\l \bx, \by \r = \bx^\top\by = \sum_i \bx_i\by_i$. We elect to use this notation when $\bx, \by$ have superscripts (such as time indices) to avoid cluttering. For positive real numbers $a, b$ we write $a \approx_{\alpha} b$ for some $\alpha > 1$ if $\alpha^{-1}b \le a \le \alpha b$. 
For positive vectors $\bx, \by \in \R^{[n]}_+$, we say $\bx \approx_{\alpha} \by$ if $\bx_i \approx_{\alpha} \by_i$ for all $i \in [n]$.

\paragraph{Graphs.} In this paper, we consider multi-graphs $G$ with edge set $E(G)$ and vertex set $V(G)$. When the graph is clear from context, we use the shorthands $E$ for $E(G)$, $V$ for $V(G)$, $m = |E|$, and $n = |V|$. We assume that each edge $e \in E$ has an implicit direction, used to define its edge-vertex incidence matrix $\mB \in \R^{E \times V}$, i.e., $\mB_{e, u} = -1$, $\mB_{e, v}=+1$, and zero elsewhere for the row corresponding to the edge $e = (u, v).$
Abusing notation slightly, we often write $e = (u,v) \in E$ where $e$ is an edge in $E$ and $u$ and $v$ are the tail and head of $e$ respectively (note that technically multi-graphs do not allow for edges to be specified by their endpoints).

A vector $\bd \in \R^V$ is a demand vector if it is orthogonal to the all-ones vector, i.e., $\sum_{v \in V} \bd_v = 0$.
We say a flow $\bf \in \R^E$ routes a demand $\bd \in \R^V$ if $\mB^\top\bf=\bd$.
We say a flow $\bf$ is a circulation if it routes an all-zeros demand, i.e., each vertex has zero net flow.
For an edge $e = (u, v) \in G$ we let $\bb_e \in \R^V$ denote the demand vector of routing one unit from $u$ to $v$.

\paragraph{Dynamic Algorithms.}
We say $G$ is a \emph{dynamic} graph, if it undergoes \emph{batches} $U^{(1)}, U^{(2)}, \ldots$ of updates consisting of edge insertions/deletions that are applied to $G$. We use $|U^{(t)}|$ to denote the number of updates contained in the batch $U^{(t)}$.
The results on dynamic graphs in this article often only consider a subset of the update types and we therefore often state for each dynamic graph which updates are allowed. We say that a dynamic graph $G$ is incremental (and decremental) if it only undergoes edge insertions (and edge deletions respectively). Additionally, we say that the graph $G$, after applying the first $t$ update batches $U^{(1)}, U^{(2)}, \ldots, U^{(t)}$, is at \emph{stage} $t$ and denote the graph at this stage by $G^{(t)}$. Additionally, when $G$ is clear, we often denote the value of a variable $x$ at the end of stage $t$ of $G$ by $x^{(t)}$, or a vector $\bx$ at the end of stage $t$ of $G$ by $\bx^{(t)}$.

\section{Technical Overview}
\label{sec:overview}

Here we provide a technical overview of the approach we take to obtain the results outlined in \Cref{sec:results}. In \Cref{sec:overview:tools}, we briefly review a variety of previous tools which we leverage and obstacles that we overcome to obtain our results. In \Cref{sec:overview:adversary}, we then elaborate on our central insight about dynamic data structures for the minimum ratio cycle problem that fuels our results. In \Cref{sec:overview:iter}, we then discuss the dynamic optimization frameworks we use to leverage this data structure. Finally, in \Cref{sec:overview:altogether}, we discuss how we put these pieces together to obtain our results.

\subsection{Previous Tools and Obstacles.}
\label{sec:overview:tools}

We begin by describing the key tools of \cite{CKLPPS22} and \cite{BLS23} which solved maximum flow in almost linear time and obtained an $\wh{O}(\eps^{-1}\sqrt{n})$ time incremental algorithm for directed maximum flow. We elaborate on these tools and the obstacles for using them to achieve $n^{o(1)} \eps^{-O(1)}$ update time for undirected maximum flow.

\paragraph{Minimum Ratio Cycle.}
Both \cite{CKLPPS22} and \cite{BLS23} are built upon efficient data structures for approximately solving the min-ratio cycle problem on a fully-dynamic graph, i.e., finding a cycle that approximately minimizes
\begin{align}\label{eq:MRCOverview}
    \min_{\bc \in \R^E \text{ is a cycle}} \frac{\l\bg, \bc\r}{\norm{\mL \bc}_1}
\end{align}
where $\bg \in \R^E$ and $\bell \in \R^E_+$ are called \emph{edge gradients} and \emph{lengths} respectively.

Both data structures are based on similar ideas.
They maintain a $d$-level hierarchy of vertex and edge sparsification data structures.
For vertex sparsification, they use \emph{dynamic low stretch decompositions} which is studied in the context of dynamic shortest paths~\cite{CGHPS20}.
For edge sparsification, \cite{CKLPPS22} proposed a data structure that maintains graph spanners, which are sparse graphs that preserves all-pairs distances, under edge updates as well as vertex splits.
The original min-ratio cycle data structure has $m^{o(1)}$ update time and outputs $m^{o(1)}$-approximate solutions against oblivious adversaries. Additionally, the data structure succeeds against adaptive adversaries, as long as the inputs satisfy some additional ``stability properties''.
We elaborate on this later in \cref{sec:monoMRC}.
In \cite{BLS23}, they make the data structure adaptive at a higher update time of $n^{1/2+o(1)}.$

\paragraph{Interior Point Methods.}
The static algorithm of \cite{CKLPPS22} uses an IPM potential $\Phi(\bf)$ to find the maximum flow.
It starts at some initial flow where $\Phi(\bf) = \O(m)$ and iteratively makes progress until the potential is $\Phi(\bf) < -\O(m).$
At that point, an optimal flow is obtained by standard rounding techniques.
When the current flow is $\bf$, the algorithm finds a $m^{o(1)}$-approximate cycle $\bc$ to an instance of min-ratio cycle \eqref{eq:MRCOverview} with $\bg \approx \nabla \Phi(\bf)$ and $\bell \approx \sqrt{\nabla^{2}\Phi(\bf)}.$
One can show that augmenting $\bf$ with a multiple of $\bc$ decreases the potential by at least $m^{-o(1)}.$
After $m^{1+o(1)}$ iterations, the algorithm reaches a flow $\bf$ whose potential value is at most $-\O(m).$

The incremental maximum flow algorithm of \cite{BLS23} dynamizes the potential reduction procedure of \cite{CKLPPS22}. To handle an edge insertions, \cite{BLS23} keeps augmenting the current flow with an $m^{o(1)}$-approximate min-ratio cycle until the output cycle cannot make enough, $m^{-o(1)}$, progress. The analysis of this leverages that adding an edge does not affect the feasibility of the current flow and only increases the potential by a constant amount.
Additionally, once a flow of cost at most the given threshold appears in the graph, any $m^{o(1)}$-approximate min-ratio cycle decreases the potential by at least $m^{-o(1)}$ as long as our current flow has its cost larger than the threshold by $\exp(-\O(1))$.
Since the potential value starts at $\otilde(m)$, over the course of the incremental algorithm, there are at most $m^{1+o(1)}$ approximate min-ratio cycle queries. This yields a $n^{1/2+o(1)}$-update time using their adaptive data structure for answering min-ratio cycle queries.

\paragraph{Obstacles.}
In the static case, \cite{CKLPPS22} manages to apply an oblivious data structure for iteratively minimizing the IPM potential due to (a) the existance an optimal flow before initializing the data structure and (b) the stability of the gradient and Hessian of the IPM potential (which are the inputs to the min-ratio cycle data structure).
Consequently, whenever the data structure fails to find a cycle good enough to make progress, it must be the case that some part of the data structure is broken, and we need to fix it actively.
However, (a) does not hold in the incremental case.
That is, whenever the oblivious data structure fails, we cannot distinguish between the case that (1) the data structure fails due to the obliviousness, or (2) the graph does not support the optimal flow because every feasible flow has a large cost.
The issue occurs even when all incremental updates come from an oblivious adversary.

\paragraph{Our Approach.} To obtain our results we depart from prior work in both how we reason about adaptive adversaries in the minimum ratio cycle problem and in what optimization method we use for this dynamic data structure. We elaborate on each of these in the next \cref{sec:overview:adversary} and \cref{sec:overview:iter} and then discuss how they are put together to obtain our results in \cref{sec:overview:altogether}.

\subsection{Adaptive Adversaries and Monotonic Dynamic Min-Ratio Cycle.}
\label{sec:overview:adversary}

We manage to use the oblivious min-ratio cycle data structure by ensuring that edge lengths are \emph{mostly increasing}.
That is, with a different numerical method, we can divide the entire incremental algorithm into $m^{o(1)}$ phases and within each phase, we solve a sequence of slowly changing min-ratio cycle problems with the same gradient $\bg$ and monotonically increasing lengths $\bell.$ 
In \cref{sec:monoMRC} which presents the min-ratio cycle data structure, we work with an adversary where we can assume that all incremental updates are determined before the initialization.
That is, the adversary does not have access to the internal state of our algorithm when deciding which edges to insert next.
Using these facts, we show, in  \cref{sec:monoMRC}, that the updates to the dynamic min-ratio cycle data structure satisfy a weaker form of the \emph{hidden stable-flow chasing} property, which was the critical property that \cite{CKLPPS22} leveraged to show correctness of their min-ratio cycle data structure. While \cite{CKLPPS22} had to periodically rebuild layers of their data structure when the ``lengths" of flows at those layers may have decreased, and this prevented its application to incremental directed maxflow in \cite{BLS23}, our monotonicity property allows us to avoid this issue.
In particular, whenever the data structure fails to output a good cycle, we know for certain that every feasible flow has large congestion.

\subsection{From IPMs to Iterative Refinement and MWU}
\label{sec:overview:iter}

This paper focuses on solving \emph{incremental thresholded $\ell_p$-norm flow} for $p \ge 2.$
This immediately gives an incremental $(1+\eps)$-approximate maximum flow due to the choice of $p.$
Unlike IPM-based maxflow algorithms, $\ell_p$-norm flows can be reduced to approximately solving $m^{o(1)}$ smoothed $\ell_p$-norm flow \emph{residual problems} to $m^{o(1)}$-approximation factors.
To approximately solve each residual problem, we use a multiplicative weight update method (MWU) to reduce the problem to a sequence of $m^{1+o(1)}$ slowly changing $\ell_1$-regression sub-problems, which are equivalent to min-ratio cycles in our case.
The nature of MWU ensures that the $\ell_1$ weights are non-decreasing.
These constraints on how the sequence of min-ratio cycle instances change enable us to use the data structure of \cite{CKLPPS22} to achieve the almost-linear runtime.

The reduction to residual problem is achieved via the iterative refinement framework of \cite{AKPS19}.
Each residual problem asks to find a circulation $\bDelta$
such that
\begin{align}\label{eq:residualOverview}
    \norm{\mR \bDelta}_2 \le m^{o(1)}\text{, }
    \norm{\mW \bDelta}_p \le m^{o(1)}\text{, and }
    \l\bg, \bDelta\r = -1
\end{align}
where $\br, \bw \in \R^E_+$ are edge weights derived from the current solution to \cref{prob:pnormflow}, and we are guaranteed the existence of a circulation whose both norms are at most $1.$

To find a feasible residual solution to Problem \cref{eq:residualOverview}, we use a MWU to reduce the problem to a sequence of $m^{1+o(1)}$ $\ell_1$ sub-problems of finding a circulation $\bc$ such that
\begin{align}\label{eq:l1subpOverview}
    \norm{\mL\bc}_1 \le m^{o(1)} \norm{\bell}_1\text{ and }\l\bg, \bc\r = -1
    \,.
\end{align}
This is equivalent to finding a min-ratio cycle \eqref{eq:MRCOverview} up to scaling the solution so that $\l\bg, \bc\r = -1.$
If we can solve each $\ell_1$ sub-problem to $m^{o(1)}$-approximation, we obtain a feasible residual solution.
Furthermore, MWU ensures that the $\ell_1$ weights are non-decreasing across all $m^{1+o(1)}$ instances.

\subsection{Putting it Altogether}
\label{sec:overview:altogether}
To dynamize the approach mentioned in \cref{sec:overview:iter}, we make the following observation:
\begin{quote}
    If $OPT \le F$, there is a feasible solution $\bDelta^*$ to \eqref{eq:residualOverview} with
    $\|\mR \bDelta\|_2, \|\mW \bDelta\|_p \le 1$
    that satisfies \eqref{eq:l1subpOverview}, i.e., $\|\mL \bDelta^*\|_1 \le \|\bell\|_1$ for any $\bell$ encountered in the algorithm.
\end{quote}
We leverage that $\bDelta^*$ is fixed with respect to $\br, \bw$ and $\bg$.
Thus, whenever the $m^{o(1)}$-approximate min-ratio cycle data structure cannot find a solution to \eqref{eq:l1subpOverview}, we certify that $OPT > F.$
Otherwise, if $OPT \le F$, the MWU method, as well as the iterative refinement procedure, would proceed as desired and output a flow $\bf$ of $\ell_p$-norm energy at most $\cE(\bf) \le F+\eps$ in $p^2 m^{1+o(1)} \log(1/\eps)$-time.

\section{Incremental \texorpdfstring{$p$}{p}-Norm Iterative Refinement}
\label{sec:incrIterRefine}

In this section, we discuss the iterative refinement approach to solving $p$-norm flows and how reduce them to incremental MWUs.
The iterative refinement framework of \cite{AKPS19} reduces $p$-norm flows to approximately solving a small number of residual problems.
The original framework requires an estimate on the optimal residual value, which is obtained via binary search.
In our setting, a target objective value is given and we can directly use it to estimate the residual value.
This requires a slight modification to the convergence analysis via measuring the progress towards the target value.

The goal of this section is to prove the following theorem:
\thmIncrPnormflow*

We first define the residual problem, $\cR_{\bf}(\bx)$, that we consider to approximate $\cE(\bf + \bx) - \cE(\bf)$ (\Cref{prob:res}) and provide a known lemma about its approximation quality (\Cref{lem:lpIR}).

\begin{prob}[Residual Problem]\label{prob:res}
For feasible flow $\bf$ in $p$-Norm flow instance $(G, \bg^G, \br^G, \bw^G, \bd)$, we define its \emph{residual problem}, $\cR_{\bf}(\bx)$, (that approximates ) as follows:
\begin{align*}
    \cR_{\bf}(\bx) \defeq \l\bg, \bx\r + \norm{\mR \bx}_2^2 + \norm{\mW \bx}_p^p,
\end{align*}
where $\bg \defeq \bg^G + 2 (\mR^G)^2 \bf + p (\mW^G)^p|\bf|^{p-2}\bf, \br \defeq \sqrt{(\br^G)^2 + 2p^2 (\mW^G)^p |\bf|^{p-2}}, \bw \defeq p \bw^G.$

We often ignore the subscript $\bf$ when it is clear from the context.
\end{prob}

\begin{lemma}[Iterative Refinement, \cite{AKPS19, APS19, AS20}]\label{lem:lpIR}
For any $\bf$ and $\bx$, we have
\begin{align*}
    \cE(\bf + \bx) - \cE(\bf) &\le \cR_f(\bx), \text{ and} \\
    \cE(\bf + \lambda \bx) - \cE(\bf) &\ge \lambda \cR_f(\bx), \text{ for some $\lambda = O(p)$}
\end{align*}
\end{lemma}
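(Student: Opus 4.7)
Plan. The lemma is the standard $\ell_p$ iterative refinement statement of \cite{AKPS19, APS19, AS20}, and its proof reduces to a pair of scalar inequalities for $|a+r|^p - |a|^p$ applied edge by edge. I would first decompose
\[
\cE(\bf + \bx) - \cE(\bf) = \langle \bg^G, \bx\rangle + 2\langle (\mR^G)^2 \bf, \bx\rangle + \|\mR^G \bx\|_2^2 + \sum_e (\bw^G_e)^p \big( |\bf_e + \bx_e|^p - |\bf_e|^p \big),
\]
observing that the linear part is exact and the $\ell_2$ part comes from a bilinear expansion, so both claimed inequalities reduce to bounding the per-edge difference $|\bf_e + \bx_e|^p - |\bf_e|^p$ above (for the first claim) and below (for the second).

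The technical heart is the AKPS-style scalar inequality: for $p \ge 2$ and all $a, r \in \R$,
\[
L_p(a,r) \;\le\; |a+r|^p - |a|^p - p |a|^{p-2} a r \;\le\; 2p^2 |a|^{p-2} r^2 + p^p |r|^p,
\]
where $L_p(a,r)$ has the same structural form as the right-hand side but with prefactors that are only a $\mathrm{poly}(p)$ factor smaller. I would prove this by writing the Bregman remainder as $\int_0^r (r-u) p(p-1) |a+u|^{p-2}\,du$ and case-splitting on whether $|r| \le |a|$ (in which case $|a+u|^{p-2}$ is comparable to $|a|^{p-2}$, yielding the $|a|^{p-2} r^2$ term) or $|r| > |a|$ (in which case $|a+u|^{p-2}$ is controlled below by $|u|^{p-2}$, yielding the $|r|^p$ term after integration).

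For the first claim, I apply the scalar upper bound per edge (weighted by $(\bw^G_e)^p$) and sum. Matching the linear contribution against $\bg = \bg^G + 2(\mR^G)^2 \bf + p (\mW^G)^p |\bf|^{p-2} \bf$, the quadratic contribution against $\br^2 = (\br^G)^2 + 2p^2 (\mW^G)^p |\bf|^{p-2}$, and the $p$-power contribution against $\bw = p \bw^G$ yields $\cE(\bf+\bx) - \cE(\bf) \le \cR_\bf(\bx)$ directly. For the second claim, I apply the scalar lower bound with $a = \bf_e$ and $r = \lambda \bx_e$ and again sum. Comparing term by term against $\lambda \cR_\bf(\bx)$ shows that the quadratic part is dominated as soon as $\lambda \ge 1$ (LHS scales as $\lambda^2$, RHS as $\lambda$), while the $p$-th power part requires $\lambda^p$ to dominate $\lambda \cdot p^p$ up to the prefactor in $L_p$; choosing $\lambda = \Theta(p)$ is exactly the smallest choice that works, giving the claimed $\lambda = O(p)$.

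Main obstacle. The delicate point is the sharp form of the scalar Bregman inequality: a careless case split produces a lower bound whose $|r|^p$ prefactor is only $1/2^p$, which would blow the required $\lambda$ up to $2^{\Theta(p)}$ rather than $O(p)$. The AKPS/APS analyses avoid this by a more careful interpolation between the $|r|\le|a|$ and $|r|>|a|$ regimes, ensuring that the prefactor in $L_p$ degrades only polynomially in $p$; once this sharpened inequality is in hand, the rest of the proof is purely mechanical term-matching with the definitions of $\bg$, $\br$, and $\bw$ in Problem \ref{prob:res}.
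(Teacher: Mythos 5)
The paper does not prove \Cref{lem:lpIR}; it is imported as a black box from \cite{AKPS19, APS19, AS20}, so there is no internal proof to compare against. That said, your decomposition and proof skeleton are exactly the standard ones: the linear and $\ell_2$ parts of $\cE(\bf+\bx)-\cE(\bf)$ expand exactly, and everything reduces to a per-edge Bregman inequality for $t\mapsto |t|^p$ which is then matched term-by-term against the definitions of $\bg$, $\br^2$, $\bw$ in \Cref{prob:res}.

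Where I would push back is on your ``main obstacle'' paragraph. You claim that a lower bound with $|r|^p$-prefactor $2^{-\Theta(p)}$ forces $\lambda = 2^{\Theta(p)}$, and that the cited works therefore need an $L_p$ whose prefactors are only $\mathrm{poly}(p)$ below the upper bound's. Both halves are off. First, no such $\mathrm{poly}(p)$-gap lower bound can exist: taking $a=-1$, $r=2$ gives $|a+r|^p-|a|^p-p|a|^{p-2}ar = 2p$, so any lower bound of the form $c_1|a|^{p-2}r^2 + c_2|r|^p$ must have $c_2 \le 2p\,2^{-p}$, i.e.\ exponentially small in $p$. Second, and more importantly, this exponentially small prefactor is perfectly adequate. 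Writing $r' = \lambda\bx_e$, the second inequality reduces (after canceling the exact linear and $\ell_2$ terms) to
\begin{align*}
|a+r'|^p - |a|^p - p|a|^{p-2}a r' \;\ge\; \frac{2p^2}{\lambda}\,|a|^{p-2}(r')^2 + \frac{p^p}{\lambda^{p-1}}\,|r'|^p,
\end{align*}
so the standard lower bound $\frac{p}{8}|a|^{p-2}(r')^2 + 2^{-(p+1)}|r'|^p$ suffices once $\lambda \ge 16p$ (for the quadratic term) and $\lambda^{p-1}\ge 2^{p+1}p^p$ (for the $p$-th power term). The latter gives $\lambda \ge \big(2^{p+1}p^p\big)^{1/(p-1)} = O(p)$ since the exponent $1/(p-1)$ kills the $2^p$ factor; in fact $\lambda=16p$ dominates both constraints for all $p\ge 2$. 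So the $2^{-\Theta(p)}$ prefactor is not an obstacle and no sharpened interpolation lemma is required -- the $O(p)$ loss in $\lambda$ comes from the $|a|^{p-2}(r')^2$ term, not from the $|r'|^p$ term.
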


Using \cref{lem:lpIR}, we can relate the optimal residual objective value to the gap between the current feasible flow $\bf$ and the target threshold $F.$

\begin{lemma}[Threshold Certification by Residual Value]
\label{lem:resThreshold}
If feasible flow $\bf$ in a $p$-Norm flow instance $(G, \bg^G, \br^G, \bw^G, \bd)$ satisfies $OPT \le F$, then there is a circulation $\bc^*$ with $\cR(\bc^*) \le (F - \cE(\bf)) / \lambda.$
\end{lemma}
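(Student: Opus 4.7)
The plan is to exhibit the desired circulation as a carefully down-scaled copy of the optimality-gap direction $\bf^* - \bf$, and then read off the residual bound directly from the second inequality of \cref{lem:lpIR}.

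First, I would fix a minimizer $\bf^*$ of $\cE$ subject to $\mB^\top \bf^* = \bd$, so that $\cE(\bf^*) = OPT \le F$ (existence follows from the coercivity of $\cE$ in the $\ell_p^p$ term). Since both $\bf$ and $\bf^*$ route the same demand $\bd$, the difference $\bx^* \defeq \bf^* - \bf$ satisfies $\mB^\top \bx^* = \mathbf{0}$ and is therefore a circulation; any scalar multiple of it is also a circulation. Next, let $\lambda = O(p)$ be the scalar supplied by the second part of \cref{lem:lpIR} at the base point $\bf$, and define $\bc^* \defeq \bx^*/\lambda$. Applying \cref{lem:lpIR} to the direction $\bc^*$ gives
$$\cE(\bf + \lambda \bc^*) - \cE(\bf) \ge \lambda \cR_\bf(\bc^*).$$
The left-hand side equals $\cE(\bf + \bx^*) - \cE(\bf) = \cE(\bf^*) - \cE(\bf)$, so rearranging and using $\cE(\bf^*) \le F$ yields
$$\cR(\bc^*) \le \frac{\cE(\bf^*) - \cE(\bf)}{\lambda} \le \frac{F - \cE(\bf)}{\lambda},$$
which is exactly the claimed bound.

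The only subtlety, and the main place one has to be careful, is the quantifier reading of $\lambda$ in \cref{lem:lpIR}: the argument needs the $O(p)$ scalar to be either direction-independent or at least compatible with first choosing $\lambda$ and then instantiating $\bx$ as $\bx^*/\lambda$, so that the identity $\bf + \lambda \bc^* = \bf^*$ on the left-hand side above is legitimate. Under the standard $\ell_p$ iterative refinement setup of \cite{AKPS19, APS19, AS20} the scalar can indeed be taken as a universal constant of order $p$, so this is routine and I do not foresee any further obstacle beyond the bookkeeping above.
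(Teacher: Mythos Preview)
Your proof is correct and is essentially identical to the paper's: both take a feasible $\bf^*$ with $\cE(\bf^*)\le F$, set $\bc^* = (\bf^*-\bf)/\lambda$, and apply the second inequality of \cref{lem:lpIR} with $\bx=\bc^*$ so that $\bf+\lambda\bc^*=\bf^*$. The quantifier concern you raise about $\lambda$ is also implicitly present in the paper's proof, and you handle it the same way (treating $\lambda$ as a universal $O(p)$ constant as in \cite{AKPS19, APS19, AS20}).
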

\begin{proof}
Let $\bf^*$ be any feasible flow such that $\cE(\bf^*) \le F.$
\cref{lem:lpIR} yields that
\begin{align*}
F - \cE(\bf) \ge \cE(\bf^*) - \cE(\bf) \ge \lambda \cR\left(\frac{\bf^* - \bf}{\lambda}\right).
\end{align*}
The conclusion follows because $\bf^* - \bf$ is a circulation.
\end{proof}

Let $R \defeq (\cE(\bf) - F) / \lambda > 0$ be the residual threshold.
Our goal now is to find a circulation $\bc$ such that $\cR(\bc) \le -R / K$ for some $K = m^{o(1)}.$
We show that we can compute this by solving the following \emph{incremental residual problem}.

\begin{restatable}[Incremental $K$-Approximate Residual Problem]{prob}{probIncrRes}\label{prob:incrRes}
Consider an incremental graph $G = (V, E)$ with at most $m$ edges, a gradient vector $\bg \in \R^E$, $\ell_2$ and $\ell_p$ edge weights $\br, \bw \in \R^E_+$ and a approximation factor $K > 0$.
The \emph{Incremental $K$-Approximate Residual Problem} asks, after the initialization or each edge insertion, to either
\begin{enumerate}
\item certify that there is no circulation $\bc^*$ with $\l\bg, \bc^*\r = -1$, $\|\mR \bc^*\|_2 \le 1$, and $\|\mW \bc^*\|_p \le 1$, or
\item output a circulation $\bc$ such that $\l\bg, \bc\r = -1$, $\|\mR \bc\|_2 \le K$, and $\|\mW \bc\|_p \le K$.
\end{enumerate}
\end{restatable}

In \cref{sec:incrMWU}, we will present an almost-linear time algorithm for \cref{prob:incrRes}.
\begin{restatable}{lemma}{lemIncrRes}\label{lem:incrRes}
For some $K = m^{o(1)}$, there is a randomized algorithm for \cref{prob:incrRes}, denoted as $\cA^{(\ref{lem:incrRes})}$,
that runs in $m^{1+o(1)}$-time and succeeds with high probability against an adaptive adversary.
\end{restatable}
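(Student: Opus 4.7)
The plan is to reduce the incremental residual problem to a sequence of min-ratio cycle queries on an incremental graph via an $\ell_1$ multiplicative weight update (MWU) method, and then implement those queries using the data structure of \cite{CKLPPS22}. Concretely, I will maintain a length vector $\bell \in \R^E_+$ that aggregates the $\ell_2$ and $\ell_p$ penalties from $\br$ and $\bw$, and a running sum $\bc_{\mathrm{avg}}$ of scaled approximate min-ratio cycles returned by the data structure. At each MWU iteration I will query for a cycle $\bc$ with $\langle \bg, \bc\rangle = -1$ and $\|\mL \bc\|_1 \le m^{o(1)} \|\bell\|_1$, accumulate $\bc$ into $\bc_{\mathrm{avg}}$ with an appropriate step size, and then perform the MWU update $\bell_e \gets \bell_e \cdot (1 + \eta |\mathbf{L}_e \bc_e| / \|\mL\bc\|_1)$ (or the analogous multiplicative rule for the $\ell_2$ and $\ell_p$ parts), which guarantees $\bell$ is coordinatewise \emph{monotonically non-decreasing} throughout the whole incremental sequence.

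The first step will be to show correctness of this MWU reduction. Standard $\ell_1$-MWU analysis (in the style used for $\ell_\infty$ regression and for $\ell_p$ residual problems in \cite{AKPS19,KPSW19}) gives that after $T = m^{1+o(1)}$ iterations, if every query returned a valid cycle, the averaged circulation $\bc_{\mathrm{avg}}$ satisfies $\langle \bg,\bc_{\mathrm{avg}}\rangle = -1$, $\|\mR \bc_{\mathrm{avg}}\|_2 \le K$, and $\|\mW \bc_{\mathrm{avg}}\|_p \le K$ for some $K = m^{o(1)}$. Conversely, the key observation from \cref{sec:overview:altogether} is that any hypothetical feasible $\bc^*$ (with both norms $\le 1$ and $\langle\bg,\bc^*\rangle = -1$) would automatically satisfy $\|\mL\bc^*\|_1 \le \|\bell\|_1$ for every length vector $\bell$ encountered, since $\bell$ is built as a convex-type combination of $\br^2$ and $\bw^p|\cdot|^{p-2}$ scaled factors. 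Hence, if the min-ratio cycle data structure ever reports that no approximate solution exists, no such $\bc^*$ can exist either, and we correctly output the certificate.

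The second and most delicate step will be to show that the min-ratio cycle data structure of \cite{CKLPPS22} can actually serve these queries in amortized $m^{o(1)}$ time \emph{against an adaptive adversary}. The dynamic graph seen by the data structure has exactly two types of updates: (i) edge insertions coming from the outer incremental instance and (ii) length increases $\bell_e \gets (1+\eta \cdot \text{something positive}) \bell_e$ produced by the MWU. Both are monotonic, and the gradient $\bg$ never changes during a single run. I will argue, deferring the detailed verification to \cref{sec:monoMRC}, that this monotone insertion-and-length-increase structure fits into a weakened version of the \emph{hidden stable-flow chasing} framework of \cite{CKLPPS22}: the hypothetical witness $\bc^*$ (or a convex combination thereof) plays the role of the hidden stable flow, and because lengths only increase and edges only appear, the potential rebuild conditions that forced \cite{BLS23} to pay $n^{1/2+o(1)}$ are never triggered. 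This yields $m^{o(1)}$ amortized cost per query, hence $m^{1+o(1)}$ total time over the $T$ queries and the $m$ insertions.

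The main obstacle I expect is precisely this second step: carefully matching the MWU-induced updates against the interface required by the \cite{CKLPPS22} data structure so that the ``mostly increasing lengths'' assumption is satisfied in the strong sense needed for adaptivity. In particular, I must ensure that the scalings done when folding the $\ell_2$-part $\|\mR\cdot\|_2^2$ and the $\ell_p$-part $\|\mW\cdot\|_p^p$ into a single $\ell_1$ length vector $\bell$ never cause any coordinate of $\bell$ to decrease, and that the step size $\eta = m^{-o(1)}$ is small enough that the MWU potential analysis goes through while also large enough to bound $T$ by $m^{1+o(1)}$. Once these are in place, assembling the final runtime and high-probability guarantee is routine: each of the $m^{1+o(1)}$ MRC queries and each of the $m$ edge insertions costs $m^{o(1)}$, giving the claimed $m^{1+o(1)}$ total time.
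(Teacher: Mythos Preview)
Your proposal is correct and follows essentially the same approach as the paper: reduce \cref{prob:incrRes} to monotone min-ratio cycle via an $\ell_1$ MWU, use $\bc^*$ as the hidden witness for the stable-flow chasing property, and invoke the \cite{CKLPPS22} data structure. Two implementation details differ from the paper's execution but do not affect the overall strategy: (i) the paper uses an \emph{additive} update on auxiliary accumulators $\ba,\bb$ with $\bell = K^{q-2}\br^2\ba + \bw^q\bb^{q-1}$ rather than a direct multiplicative rule on $\bell$, which makes the potential analysis ($\Phi=\|\mR\ba\|_2^2$, $\Psi=\|\mW\bb\|_q^q$) cleaner; and (ii) the paper replaces $p$ by $q=\min\{\log_2 m,p\}$ so that $T=O(qm)=\tilde O(m)$ iterations suffice even for large $p$, whereas your sketch implicitly needs $p=m^{o(1)}$ to get $T=m^{1+o(1)}$.
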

Here, recall that the adversary is only adaptive against the yes/no output of the algorithm, and cannot see the internal randomness, including the flow that is being stored internally. Obviously, our algorithm can output a flow when we get a ``no" output, by computing a high-accuracy $p$-norm minimizing flow, but this is different from the flow being stored internally.

At first glance, the solution guarantee of \cref{prob:incrRes} has nothing to do with the residual problem (\cref{prob:res}) and $R$.
However, since we only need a solution whose objective is better than $m^{-o(1)}(F - \cE(\bf)) / \lambda$, we can reduce incremental $p$-norm regression to the form of \cref{prob:incrRes} using the following pair of lemmas. The first is a straightforward scaling argument.

\begin{lemma}\label{lem:resRatio}
If there's some $\bc^*$ such that $\cR(\bc^*) \le -R$
for some threshold $R > 0$ then,
\begin{align*}
    \frac{\l\bg, \bc^*\r}{\norm{\mR \bc^*}_2} \le -2 \sqrt{R}\text{ and }
    \frac{\l\bg, \bc^*\r}{\norm{\mW \bc^*}_p} \le -R^{(p-1)/p}
\end{align*}
\end{lemma}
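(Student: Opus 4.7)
The plan is to decouple the two asserted ratio bounds by weakening the single inequality $\cR(\bc^*) \le -R$ in two different ways and then applying a standard AM-GM / Young's inequality to each resulting two-term lower bound. Setting $a \defeq \l\bg,\bc^*\r$, $b \defeq \|\mR\bc^*\|_2^2$, and $c \defeq \|\mW\bc^*\|_p^p$, the hypothesis reads $a + b + c \le -R$ with $b, c \ge 0$. Dropping the $c$ term yields $-a \ge R + b$, and dropping the $b$ term yields $-a \ge R + c$. Since $\sqrt{b} = \|\mR\bc^*\|_2$ and $c^{1/p} = \|\mW\bc^*\|_p$ are exactly the denominators of the two target ratios, each of these two-term lower bounds on $-a$ should produce one of the two ratio bounds. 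Note that $R > 0$ forces $\bc^* \neq 0$ and hence $b, c > 0$, so no division-by-zero issue arises.

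For the $\ell_2$ half, I would apply AM-GM on the right-hand side of $-a \ge R + b$ to get $R + b \ge 2\sqrt{Rb}$. Dividing by $\sqrt{b}$ gives $-a / \sqrt{b} \ge 2\sqrt{R}$, which is exactly $\l\bg,\bc^*\r / \|\mR\bc^*\|_2 \le -2\sqrt{R}$.

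For the $\ell_p$ half, I would apply Young's inequality $xy \le x^p/p + y^q/q$ with conjugate exponent $q = p/(p-1)$, taking $x = c^{1/p}$ and $y = R^{(p-1)/p}$. This yields $c^{1/p} R^{(p-1)/p} \le c/p + (p-1)R/p$. Since $1/p + (p-1)/p = 1$, the right-hand side is a convex combination of $c$ and $R$ and is in particular at most $c + R$. Combining with $-a \ge R + c$ gives $-a \ge c^{1/p} R^{(p-1)/p}$, i.e. $\l\bg,\bc^*\r / \|\mW\bc^*\|_p \le -R^{(p-1)/p}$.

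I do not anticipate a real obstacle; the lemma is essentially two one-line scaling arguments. The only mildly delicate point is the $\ell_p$ step, where Young's inequality would naturally produce a $p$-dependent leading constant of the form $p^{-1/p}(p-1)^{-(p-1)/p}$. One can afford to relax $c/p + (p-1)R/p$ upward to $c + R$ to land on the clean coefficient $1$ appearing in the lemma statement, which is exactly what makes the two halves combine neatly when this lemma is later fed into the reduction to \cref{prob:incrRes}.
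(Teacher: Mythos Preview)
Your proposal is correct and essentially matches the paper's proof. The paper also drops one of the two nonnegative terms from $\cR(\bc^*)\le -R$ for each inequality and then applies AM-GM (respectively weighted AM-GM/Young) to bound $R/\|\mR\bc^*\|_2 + \|\mR\bc^*\|_2$ and $R/\|\mW\bc^*\|_p + \|\mW\bc^*\|_p^{p-1}$; the only cosmetic difference is that the paper divides first and then applies AM-GM, whereas you apply AM-GM/Young to $R+b$ and $R+c$ before dividing.
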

\begin{proof}
From the bound that $\cR(\bc^*) \le -R$, we know
\begin{align*}
    \l\bg, \bc^*\r + \norm{\mR \bc^*}_2^2 \le -R
\end{align*}
Thus we know that
\[ \frac{\l \bg, \bc^*\r}{\norm{\mR\bc^*}_2} \le -\frac{R}{\norm{\mR\bc^*}_2} - \norm{\mR\bc^*}_2 \le -2\sqrt{R}, \] by the AM-GM inequality. Similarly, 
\[ \l\bg, \bc^*\r + \norm{\mW \bc^*}_p^p \le -R, \]
and hence
\[ \frac{\l \bg, \bc^*\r}{\norm{\mW\bc^*}_p} \le -\frac{R}{\norm{\mW\bc^*}_p} - \norm{\mW\bc^*}_p^{p-1} \le -R^{\frac{p-1}{p}}, \] where we used the (weighted) AM-GM inequality again.
\end{proof}

The second, given below, is used to argue that the solution quality of \cref{prob:incrRes} translates to a bound on the residual objective value.
\begin{lemma}
\label{lem:resValue}
Given a residual threshold $R > 0$ and the current residual problem $\cR(\bc)$ with gradient $\bg$ and $\ell_2$ and $\ell_p$ edge weights $\br$ and $\bw$, consider a circulation $\bc$ such that $\l\bg, \bc\r = -1$, $\|2\sqrt{R}\mR \bc\|_2 \le K$, and $\|R^{(p-1)/p} \mW \bc\|_p \le K$ for some $K \ge 1$.
We have $\cR((R / (2 K^2)) \bc) \le -R / (6 K^2).$
\end{lemma}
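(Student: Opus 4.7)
The plan is a direct scaling calculation. Let $\alpha \defeq R/(2K^2)$ denote the scaling factor applied to $\bc$. Since the residual objective decomposes as a linear term plus an $\ell_2^2$ term plus an $\ell_p^p$ term, it scales cleanly in $\alpha$:
\begin{align*}
\cR(\alpha \bc) \;=\; \alpha \l \bg, \bc \r \;+\; \alpha^2 \norm{\mR \bc}_2^2 \;+\; \alpha^p \norm{\mW \bc}_p^p.
\end{align*}
The first step is to plug in $\l \bg, \bc\r = -1$, giving a linear ``gain'' term of $-\alpha = -R/(2K^2)$. This is the term we want to dominate the two nonnegative ``error'' terms.

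Next, I would bound the two error terms using the hypotheses. The hypothesis $\|2\sqrt{R}\,\mR \bc\|_2 \le K$ gives $\|\mR \bc\|_2^2 \le K^2/(4R)$, so
\begin{align*}
\alpha^2 \norm{\mR \bc}_2^2 \;\le\; \frac{R^2}{4K^4}\cdot\frac{K^2}{4R} \;=\; \frac{R}{16K^2}.
\end{align*}
Similarly $\|R^{(p-1)/p}\mW\bc\|_p \le K$ gives $\|\mW \bc\|_p^p \le K^p/R^{p-1}$, so
\begin{align*}
\alpha^p \norm{\mW \bc}_p^p \;\le\; \frac{R^p}{2^p K^{2p}}\cdot\frac{K^p}{R^{p-1}} \;=\; \frac{R}{2^p K^p}.
\end{align*}
Since $p \ge 2$ and $K \ge 1$, we have $2^p K^p \ge 4K^2$, hence $\alpha^p \|\mW\bc\|_p^p \le R/(4K^2)$.

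Combining the three contributions,
\begin{align*}
\cR(\alpha \bc) \;\le\; -\frac{R}{2K^2} + \frac{R}{16K^2} + \frac{R}{4K^2} \;=\; -\frac{3R}{16K^2} \;\le\; -\frac{R}{6K^2},
\end{align*}
which is the desired inequality. There is no real obstacle here: the proof is pure arithmetic, and the constant $1/6$ in the statement is chosen with slack precisely so that the inequality $3/16 \ge 1/6$ (equivalently $18 \ge 16$) closes the calculation for every $p \ge 2$ and $K \ge 1$. The conceptual content is simply that choosing $\alpha$ proportional to $R/K^2$ balances the quadratic regularizer against the linear gain, while the $\ell_p^p$ term, being of higher order in $\alpha$, is automatically negligible for $p \ge 2$ and $K \ge 1$.
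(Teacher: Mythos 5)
Your proof is correct and follows essentially the same route as the paper: plug in the scaling $\alpha = R/(2K^2)$, rewrite each term of $\cR(\alpha\bc)$ via homogeneity, bound the quadratic and $p$-th power terms using the two hypotheses, and observe the linear gain dominates. If anything your final arithmetic is slightly more careful: you explicitly use $2^pK^p \ge 4K^2$ to reach the clean bound $-3R/(16K^2)$, whereas the paper's displayed chain ends with ``$\le -R/(5K^2)$,'' which is actually a bit too strong for the boundary case $p=2, K=1$ (where the sum is exactly $-3R/(16K^2) > -R/(5K^2)$); both, however, comfortably imply the stated $\le -R/(6K^2)$.
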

\begin{proof}
From the assumptions of the lemma, we know that
\begin{align*}
    \l\bg, \bc\r = -1\text{, }
    \norm{\mR \bc}_2 \le \frac{K}{2\sqrt{R}}\text{, and }
    \norm{\mW \bc}_p \le K R^{-(p-1)/p}
\end{align*}
Plugging these into the definition of the residual problem yields
\begin{align*}
\cR\left(\frac{R}{2K^2}\bc\right) 
&= \frac{R}{2K^2} \l\bg, \bc\r + \frac{R^2}{4K^4}\norm{\mR \bc}_2^2 + \frac{R^p}{2^pK^{2p}} \norm{\mW \bc}_p^p \\
&\le -\frac{R}{2K^2} + \frac{R^2}{4K^4}\frac{K^2}{4R} + \frac{R^p}{2^pK^{2p}} K^{p} R^{-(p-1)} \\
&= -\frac{R}{2K^2} + \frac{R}{16K^2} + \frac{R}{2^pK^p} \le \frac{-R}{5K^2}\,.
\end{align*}
\end{proof}

Now, we are ready to state \cref{algo:incrPnormflow} which we use to prove \cref{thm:incrPnormflow}.

\begin{algorithm}[!ht]
  \caption{Incremental Thresholded $p$-Norm Flow \label{algo:incrPnormflow}}
  \SetKwProg{Globals}{global variables}{}{}
  \SetKwProg{Proc}{procedure}{}{}
  \Globals{}{
    $p$: the norm considered in \cref{prob:incrPnormflow}. \\
    $\lambda \gets O(p)$: the scaling factor in \cref{lem:lpIR}. \\
    $K \gets m^{o(1)}$: the approximation factor in \cref{lem:incrRes}.
  }
  \Proc{$\IncrPNorm(G, \bd, \bg^G, \br^G, \bw^G, F, \eps)$}{
    Initialize $\bf^{(0)}$ to be the optimal $p$-norm flow on $G$. \\
    $S \defeq O(p K^2 \log (\frac{\cE(\bf^{(0)}) - F}{\eps}))$ \\
    \For{step $t = 0, 1, \ldots, S$}{
        Construct residual problem $(\bg^{(t)}, \br^{(t)}, \bw^{(t)})$ with respect to $\bf^{(t)}.$ \\
        Define residual threshold $R \defeq (\cE(\bf^{(t)}) - F) / \lambda$. \\
        Initialize and run the incremental approximate residual algorithm \\
        \While{$\cA^{(\ref{lem:incrRes})}$ cannot find a circulation satisfying \Cref{lem:incrRes} item 2}{
            Claim that $OPT > F.$ \\
            Wait for a new edge $e.$ \\
            Insert the new edge $e$ to $\cA^{(\ref{lem:incrRes})}$ with
            \begin{align*}
            \bg^{(t)}_e \defeq \bg^G_e
            \text{, }
            \br^{(t)}_e \defeq 2\sqrt{R}\br^G_e\text{, and } \bw^{(t)}_e \defeq R^{(p-1)/p}\bw^G_e   
            \end{align*} \\
        }
        $\cA^{(\ref{lem:incrRes})}$ finds a circulation $\bc$ such that
        \begin{align*}
        \l\bg^{(t)}
        \text{ , }\bc\r = -1, \|2\sqrt{R}\mR^{(t)} \bc\|_2 \le K
        \text{, and }\|R^{(p-1)/p} \mW^{(t)} \bc\|_p \le K\,.
        \end{align*} \\
        Set $\bf^{(t+1)} \gets \bf^{(t)} + \frac{R}{2K^2} \bc$
    }
    \Return $\bf^{(S)}$
  }
\end{algorithm}

To show correctness, we first prove that each step makes an exponential progress. Leveraging this lemma we then prove \cref{thm:incrPnormflow}.

\begin{lemma}[Iterative Refinement Convergence Rate]
\label{lem:convergence}
At the end of any step $t$ in \cref{algo:incrPnormflow},
\begin{align*}
    \cE(\bf^{(t+1)}) - F \le \left(1 - \frac{1}{6K^2 \lambda}\right) \left(\cE(\bf^{(t)}) - F\right)
\end{align*}
\end{lemma}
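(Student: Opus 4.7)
The plan is to combine three already-established facts in essentially one line each: the upper-bound half of iterative refinement (\Cref{lem:lpIR}), the quality of the circulation returned by $\cA^{(\ref{lem:incrRes})}$, and the scaling lemma \Cref{lem:resValue}. The statement is phrased as a one-step contraction, so once we show that the progress made in step $t$ is proportional to $\cE(\bf^{(t)}) - F$, the recursive form follows by rearranging.

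First I would use the update rule $\bf^{(t+1)} = \bf^{(t)} + \tfrac{R}{2K^2}\bc$ together with the first inequality of \Cref{lem:lpIR} applied at $\bf = \bf^{(t)}$ and $\bx = \tfrac{R}{2K^2}\bc$ to obtain
\begin{align*}
\cE(\bf^{(t+1)}) - \cE(\bf^{(t)}) \;\le\; \cR_{\bf^{(t)}}\!\left(\tfrac{R}{2K^2}\bc\right).
\end{align*}
Next I would invoke \Cref{lem:resValue}. Its hypotheses are exactly the three guarantees that $\cA^{(\ref{lem:incrRes})}$ returns upon exiting the inner \textbf{while} loop, namely $\langle\bg^{(t)},\bc\rangle=-1$, $\|2\sqrt{R}\,\mR^{(t)}\bc\|_2\le K$, and $\|R^{(p-1)/p}\mW^{(t)}\bc\|_p\le K$ (with $\mR^{(t)},\mW^{(t)}$ the residual weights associated with $\bf^{(t)}$, so that the scalings $2\sqrt{R}$ and $R^{(p-1)/p}$ translate the outputs of $\cA^{(\ref{lem:incrRes})}$ back to the unnormalized residual problem). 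Hence
\begin{align*}
\cR_{\bf^{(t)}}\!\left(\tfrac{R}{2K^2}\bc\right) \;\le\; -\tfrac{R}{6K^2}.
\end{align*}

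Finally, substituting the definition $R = (\cE(\bf^{(t)}) - F)/\lambda$ and chaining the two inequalities gives
\begin{align*}
\cE(\bf^{(t+1)}) - \cE(\bf^{(t)}) \;\le\; -\tfrac{1}{6K^2\lambda}\bigl(\cE(\bf^{(t)}) - F\bigr),
\end{align*}
and adding $\cE(\bf^{(t)}) - F$ to both sides produces the claimed contraction.

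I do not expect any serious obstacle; the lemma is essentially a bookkeeping step that links the data-structure guarantee to the iterative-refinement convergence rate. The only point that needs care is verifying that the three inequalities satisfied by $\bc$ when $\cA^{(\ref{lem:incrRes})}$ exits the inner loop (stated in terms of the scaled inputs handed to $\cA^{(\ref{lem:incrRes})}$) really match the unscaled hypotheses required by \Cref{lem:resValue}. Once this one-line scaling check is made, the proof of the lemma collapses to the chain above.
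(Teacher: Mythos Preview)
Your proposal is correct and follows essentially the same approach as the paper: apply \Cref{lem:lpIR} to bound $\cE(\bf^{(t+1)})-\cE(\bf^{(t)})$ by the residual, invoke \Cref{lem:resValue} to bound the residual by $-R/(6K^2)$, substitute $R=(\cE(\bf^{(t)})-F)/\lambda$, and rearrange. The scaling check you flag is immediate, since the algorithm's stated output guarantee already records the inequalities in exactly the form $\|2\sqrt{R}\,\mR^{(t)}\bc\|_2\le K$ and $\|R^{(p-1)/p}\mW^{(t)}\bc\|_p\le K$ required by \Cref{lem:resValue}.
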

\begin{proof}
At the end of step $t$, \cref{lem:resValue} ensures that $\cR(\frac{R}{2K^2} \bc) \le -\frac{R}{6K^2}.$
\cref{lem:lpIR} then yields
\begin{align*}
\cE(\bf^{(t+1)}) - \cE(\bf^{(t)})
\le \cR\left(\frac{R}{2K^2} \bc\right)
\le \frac{-1}{6K^2 \lambda} \left(\cE(\bf^{(t)}) - F\right)
\end{align*}
Thus, we have
\begin{align*}
\cE(\bf^{(t+1)}) - F
&= \cE(\bf^{(t+1)}) - \cE(\bf^{(t)}) + \cE(\bf^{(t)}) - F \\
&\le \left(1 - \frac{1}{6K^2 \lambda}\right) \left(\cE(\bf^{(t)}) - F\right)
\end{align*}
\end{proof}

\begin{proof}[Proof of \cref{thm:incrPnormflow}]
We first show that whenever the algorithm outputs that $OPT > F$, this is indeed the case.
Let $t$ be an arbitrary step in which the algorithm outputs that $OPT > F$, let $\bf^{(t)}$ be the flow maintained at that step, and let $\cR(\bx)$ be the corresponding residual problem instance (\cref{prob:res}).
Correctness of \cref{lem:incrRes} ensures that no circulation $\bc^*$ satisfies
\begin{align*}
    \l\bg^{(t)}, \bc^*\r = -1, \|2\sqrt{R}\mR^{(t)} \bc^*\|_2 \le 1, \|R^{(p-1)/p} \mW^{(t)} \bc^*\|_p \le 1
\end{align*}
This fact and \cref{lem:resRatio} then implies that every circulation $\bc^*$ must have $\cR(\bc^*) > -R = (F - \cE(\bf^{(t)})) / \lambda.$
\cref{lem:resThreshold} implies that $OPT > F.$

Now, we analyze the quality of the output, $\bf^{(S)}.$
Applying \cref{lem:convergence} inductively on steps $t$ yields that
\begin{align*}
    \cE(\bf^{(S)}) - F \le \left(1 - \frac{1}{6K^2 \lambda}\right)^S \left(\cE(\bf^{(0)}) - F\right) \le \eps
\end{align*}
by definitions of $S$ and $\lambda$ (\cref{lem:lpIR}).

Finally, the runtime comes directly from $S = m^{o(1)} p \log(\frac{1}{\eps})$ applications of \cref{lem:incrRes}. (Recall that there is an extra factor of $p$ due to bit complexity.)
\end{proof}

\section{Incremental Multiplicative Weight Updates}
\label{sec:incrMWU}

In this section, we present a MWU-based algorithm that solves \cref{prob:incrRes} and proves \cref{lem:incrRes}.
We use \cref{algo:incrRes} to prove \cref{lem:incrRes}.
At a high level, the algorithm outputs the final circulation as an average of $T = \O(m)$ circulations.
At each iteration $i$, we compute a set of $\ell_1$ edge weights $\bell \in \R^E_+$ and compute an approximate min ratio cycle that minimizes $\l\bg, \bc\r / \|\mL \bc\|_1.$
As is too costly to compute each cycle from scratc, we instead apply a dynamic data structure simplified from \cite{CKLPPS22} to compute such cycle efficiently.

\begin{restatable}[Approximate Monotonic Min Ratio Cycle (MRC)]{prob}{probMonoMRC}\label{prob:monoMRC}
Consider a target ratio $\alpha > 0$ and a dynamic graph $G = (V, E)$ with edge gradients $\bg \in \R^E$, and lengths $\bell \in \R^E_+$ under edge insertions and length increases.
In addition, there is a hidden circulation $\bc^*$ not necessarily supported on $G$, i.e., there is possibly some edge not in $G$ with $\bc^*_e \neq 0.$
The problem of \emph{$\kappa$-approximate monotonic min ratio cycle} asks, after the initialization and each edge update,
\begin{enumerate}
\item If $\bc^*$ is not supported on $G$: output any circulation.
\item If $\bc^*$ is supported on $G$: output a circulation $\bc$ such that
\begin{align*}
    \frac{\l\bg, \bc\r}{\norm{\mL \bc}_1} \le \frac{-\alpha}{\kappa},
\end{align*}
given that $\frac{\l \bg, \bc^*\r}{\norm{\mL \bc^*}_1} \le -\alpha$.
\end{enumerate}
\end{restatable}

We can solve \cref{prob:monoMRC} in almost-linear time by using a data structure from \cite{CKLPPS22}. The guarantee of this data structure is stated below and the lemma is proved in \Cref{sec:monoMRC}.

\begin{restatable}[Approximate Monotonic MRC Data Structure]{lemma}{lemMonoMRC}\label{lem:monoMRC}
For some $\kappa = m^{o(1)}$, there is a randomized algorithm, denoted $\cA^{(\ref{lem:monoMRC})}$, that maintains a collection of $s = m^{o(1)}$ spanning trees $T_1, T_2, \ldots, T_s$ and solves the $\kappa$-approximate monotonic MRC problem (\cref{prob:monoMRC}) in $(m+Q)m^{o(1)}$-time where $Q$ is the total number of updates.
In particular, it always output a cycle $\bDelta$ represented by $m^{o(1)}$ off-tree edges and paths on a spanning tree $T_i$ for some $i \in [s]$.
\end{restatable}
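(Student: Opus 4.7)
The plan is to invoke the dynamic approximate min-ratio cycle data structure of \cite{CKLPPS22} essentially as a black box, verifying that our monotonic update sequence fits a variant of the \emph{hidden stable-flow chasing} precondition under which that data structure remains correct against an adaptive adversary. Recall that the CKLPPS22 data structure maintains an $O(\log^{1/4} m)$-depth hierarchy of low-stretch spanning forests together with dynamic spanners on the remaining ``core'' edges, and at each query returns a cycle consisting of a single off-tree edge and the corresponding tree path; this already matches the claimed output format of $m^{o(1)}$ off-tree edges plus paths on one of $s = m^{o(1)}$ trees.

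To verify correctness, I would take the hidden flow required by the CKLPPS22 framework to be precisely the fixed circulation $\bc^*$ from Case 2 of \cref{prob:monoMRC}, padded with zeros on edges not yet inserted. Because $\bg$ never changes and $\bell$ only increases, the ratio $\langle \bg, \bc^* \rangle / \norm{\mL \bc^*}_1$ associated with this hidden flow evolves under a strictly monotone schedule, and its support can only grow as new edges arrive. The internal length approximator $\widetilde{\bell}$ maintained inside the data structure therefore only has to track an increasing sequence, which can be handled amortized by treating each $(1 + m^{-o(1)})$-factor length increase as a virtual edge deletion followed by a re-insertion---the total rebuilding work is bounded by the initial graph size plus the number of updates, preserving the $m^{o(1)}$ amortized cost per operation.

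Once the precondition is in place, the CKLPPS22 guarantee yields: whenever $\bc^*$ is supported on the current graph and certifies ratio at most $-\alpha$, the data structure outputs a cycle of ratio at most $-\alpha/\kappa$ for some $\kappa = m^{o(1)}$; in Case 1 the data structure may fail to find such a cycle, and we return an arbitrary circulation as permitted. Summing amortized costs across $Q$ updates gives total runtime $(m + Q)\, m^{o(1)}$. Adaptive-adversary correctness follows because the hidden flow $\bc^*$ is fixed by the problem rather than chosen by the adversary in response to the data structure's internal randomness, so the oblivious-style analysis of CKLPPS22---which really only requires the hidden flow to be uncorrelated with the random choices inside the spanner and forest constructions---applies directly.

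The main obstacle is the careful bookkeeping needed to match our monotone setting to the precise technical preconditions of CKLPPS22, which involve several parameters controlling stretch bounds, tree-edge congestion, and the slack in the length approximator. The conceptually important point is that monotone length increases and edge insertions never force any component of the hierarchy to ``shrink''---exactly the scenario that forced \cite{BLS23} to design an adaptive variant of the data structure at a cost of $n^{1/2 + o(1)}$ per update. Monotonicity sidesteps this because every relevant quantity---the edge lengths, the support of the hidden flow, and the lower bound on its ratio---moves in only one direction, so the standard amortized rebuilding schedule inside CKLPPS22 absorbs all the work.
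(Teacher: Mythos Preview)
Your approach is the same as the paper's: invoke the CKLPPS22 data structure (here stated as \cref{thm:norebuild}) with $\bc^*$ playing the role of the hidden circulation, and implement each length increase as a deletion followed by a re-insertion.

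There is, however, a genuine gap in how you set up the hidden flow. You propose taking it to be ``$\bc^*$ padded with zeros on edges not yet inserted,'' but when $\supp(\bc^*) \not\subseteq G^{(t)}$ the restriction of $\bc^*$ to the current edge set is \emph{not} a circulation, so \cref{item:circulation} of \cref{def:hiddenStableFlowChasing} fails. The paper handles this by setting $\bc^{(t)} = \mathbf{0}$ whenever $\bc^*$ is not yet fully supported and $\bc^{(t)} = \bc^*$ once it is. You also never define the hidden width $\bw^{(t)}$; the paper takes $\bw^{(t)}_e = |\bell^{(t)}_e \bc^*_e|$ for $e \in \supp(\bc^*)$ and $0$ otherwise, and then verifies that $\|\bw^{(t)}\|_1$ is non-decreasing in $t$. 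It is the monotonicity of this \emph{total width}, not of the ratio $\langle \bg, \bc^*\rangle / \|\mL \bc^*\|_1$ that you track, which is the extra hypothesis in \cref{thm:norebuild} allowing the simpler no-rebuild variant of the data structure to be used. A couple of minor inaccuracies: the hierarchy depth is $(\log m)^{1/8}$ rather than $O(\log^{1/4} m)$, and the output cycle consists of $m^{o(1)}$ off-tree edges and tree paths, not a single off-tree edge.
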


We now state \cref{algo:incrRes} which shows \cref{lem:incrRes}. The algorithm runs a MWU algorithm with $\ell_1$-norm subproblems for $\O(m)$ iterations. This is motivated by the $\ell_1$-IPM of \cite{CKLPPS22} for solving min-cost flow, and contrasts with $\ell_2$-norm MWUs of \cite{CKMST11,AKPS19} for approximate maxflow and $p$-norm regression. These $\ell_1$-norm subproblems are min-ratio cycle problems, which we solve by calling a min-ratio cycle data structure. If the output has sufficiently good ratio, the algorithm proceeds to the next iteration. Otherwise, we conclude that $G$ does not support a valid circulation yet, and ask for the next edge insertion.
\begin{algorithm}[!ht]
  \caption{Incremental $K$-Approximate Residual Algorithm \label{algo:incrRes}}
  \SetKwProg{Globals}{global variables}{}{}
  \SetKwProg{Proc}{procedure}{}{}
  \Globals{}{
    $p$: the norm considered in \cref{prob:incrRes}. \\
    $q \gets \min\{\log_2 m, p\}$: the norm which the algorithm handles. \\
    $\kappa \gets m^{o(1)}$: the approximation factor in \cref{lem:monoMRC}. \\
    $K \gets 100 q \kappa$: the approximation factor of this algorithm. \\
    $\alpha \gets K^{1-q} / (10q)$: the target ratio given in \cref{lem:targetRatio}. \\
  }
  \Proc{$\IncrMWU(G, \bg, \br, \bw)$}{
    Initialize $\bc^{(0)} = \mathbf{0}$, $\ba^{(0)} = K m^{-1/2} \br^{-1}$, and $\bb^{(0)} = K m^{-1/q}\bw^{-1}.$ \\
    Initialize edge lengths $\bell^{(0)} \defeq K^{q-2}\br^2\ba^{(0)} + \bw^q(\bb^{(0)})^{q-1}.$ \\
    Initialize a $\kappa$-approximate MRC algorithm $\cA^{(\ref{lem:monoMRC})}$ on $(G, \bg, \bell)$ and target ratio $\alpha.$ \\
    Set $T \gets 100qm$ \\
    \For{iteration $i = 0, 1, \ldots, T$}{
        Define the edge length $\bell^{(i)} \defeq K^{q-2}\br^2\ba^{(i)} + \bw^q(\bb^{(i)})^{q-1}.$\\
        Maintain an estimation $\wt{\bell} \in [\bell^{(i)}, 2 \bell^{(i)}].$\\
        Feed updates to the estimation $\wt{\bell}$ to $\cA^{(\ref{lem:monoMRC})}.$ \\
        \While{$\cA^{(\ref{lem:monoMRC})}$ outputs a cycle of ratio $> -\alpha / \kappa$ \label{line:outputbad}}{
            Claim that there's no circulation $\bc^*$ s.t. 
            \begin{align*}
                \l\bg, \bc^*\r = -1\text{, }
                \norm{\mR\bc^*}_2 \le 1\text{, and }
                \norm{\mW\bc^*}_p \le 1
            \end{align*} \label{line:claim} \\
            Wait for a new edge $e.$ \\
            Set $\ba_e^{(i)} = K m^{-1/2} \br^{-1}_e$ and $\bb_e^{(i)} = K m^{-1/q}\bw^{-1}_e.$ \\
            Insert the new edge $e$ to $\cA^{(\ref{lem:monoMRC})}$ with gradient $\bg_e$ and length $\wt{\bell}_e = \bell_e^{(i)} = K^{q-2}\br^2_e\ba_e^{(i)} + \bw^q_e(\bb^{(i)}_e)^{q-1}.$
        }
        $\cA^{(\ref{lem:monoMRC})}$ outputs a cycle $\bDelta^{(i)}$ s.t. $\l\bg, \bDelta^{(i)}\r / \|\wt{\mL} \bDelta^{(i)}\|_1 \le -\alpha / \kappa.$ \label{line:foundCycle} \\
        Scale $\bDelta^{(i)}$ such that $\l\bg, \bDelta^{(i)}\r = -1$ and $\|\wt{\mL} \bDelta^{(i)}\|_1 \le \kappa / \alpha \le K^q$. \\
        Update $\bc^{(i+1)} \defeq \bc^{(i)} + \frac{1}{T} \bDelta^{(i)}$, $\ba^{(i+1)} \defeq \ba^{(i)} + \frac{1}{T} |\bDelta^{(i)}|$, and $\bb^{(i+1)} \defeq \bb^{(i)} + \frac{1}{T} |\bDelta^{(i)}|$
    }
    \Return $\bc^{(T)}$
  }
\end{algorithm}

To analyze \cref{algo:incrRes}, we keep track of the following two potentials:
\begin{align}
    \Phi^{(i)} &\defeq \norm{\mR \ba^{(i)}}_2^2 \\
    \Psi^{(i)} &\defeq \norm{\mW \bb^{(i)}}_q^q
\end{align}
where $\ba$ and $\bb$ are defined as
\begin{align}
    \ba^{(i)} &\defeq \frac{K}{\sqrt{m}} \br^{-1} + \frac{1}{T} \sum_{j = 0}^{i-1} |\bDelta^{(j)}| \\
    \bb^{(i)} &\defeq \frac{K}{m^{1/q}} \bw^{-1} + \frac{1}{T} \sum_{j = 0}^{i-1} |\bDelta^{(j)}|
\end{align}

Because of the definition of $\ba^{(i)}$ and $\bb^{(i)}$, we know $\ba^{(i)}, \bb^{(i)} \ge |\bc^{(i)}|$ after any iteration $i$ and thus both $\Phi^{(i)} \le \|\mR \bc^{(i)}\|_2^2$ and $\Psi^{(i)} \le \|\mW \bc^{(i)}\|_q^q$.

\begin{lemma}[Initial Potential]
\label{lem:initPot}
Initially, $\Phi^{(0)} = K^2$ and $\Psi^{(0)} = K^q.$
\end{lemma}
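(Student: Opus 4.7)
The proof is a direct computation from the definitions, so my plan is simply to unfold the initial values of $\ba^{(0)}$ and $\bb^{(0)}$ and observe that the diagonal scaling by $\mR$ (respectively $\mW$) exactly cancels the inverse factor $\br^{-1}$ (respectively $\bw^{-1}$), leaving a constant vector whose norm is trivial to evaluate.

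First I would handle $\Phi^{(0)}$. By definition $\ba^{(0)} = K m^{-1/2} \br^{-1}$, so the $e$-th coordinate of $\mR \ba^{(0)}$ equals $\br_e \cdot K m^{-1/2} \br_e^{-1} = K m^{-1/2}$. Summing the squared entries over the $m$ edges gives $\Phi^{(0)} = \sum_{e \in E} (K m^{-1/2})^2 = m \cdot K^2 / m = K^2$.

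Next I would handle $\Psi^{(0)}$ analogously. Since $\bb^{(0)} = K m^{-1/q} \bw^{-1}$, the $e$-th coordinate of $\mW \bb^{(0)}$ equals $\bw_e \cdot K m^{-1/q} \bw_e^{-1} = K m^{-1/q}$. Raising to the $q$-th power and summing over edges gives $\Psi^{(0)} = \sum_{e \in E} (K m^{-1/q})^q = m \cdot K^q / m = K^q$.

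There is no genuine obstacle here; the lemma is really a sanity-check that the normalization constants $K m^{-1/2}$ and $K m^{-1/q}$ appearing in the initialization of $\ba^{(0)}$ and $\bb^{(0)}$ are precisely the ones that make the initial potentials equal $K^2$ and $K^q$ respectively. These normalized initial values are later used in the potential-growth analysis of \cref{algo:incrRes}, and the clean starting values will make that analysis (i.e., tracking $\Phi^{(i)}, \Psi^{(i)}$ over the $T$ MWU iterations) more transparent.
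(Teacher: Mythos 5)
Your proof is correct and is precisely the direct unfolding of the definitions; the paper itself states \cref{lem:initPot} without a proof, treating it as immediate, and your computation is exactly what is implicitly intended. One small presentational note: the equality relies on the sum running over all $m$ (eventual) edges — equivalently, on viewing $\ba^{(0)},\bb^{(0)}$ as vectors over the full edge set with not-yet-inserted edges initialized the same way, which is consistent with how the algorithm initializes $\ba_e^{(i)}, \bb_e^{(i)}$ upon insertion — so it is worth making that convention explicit to avoid the reading where the initial graph has fewer than $m$ edges.
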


\begin{lemma}[Increase in $\Phi$]
\label{lem:phiIncr}
At any iteration $i$, we have $\Phi^{(i+1)} \le \Phi^{(i)} + 3K^2 / T.$
\end{lemma}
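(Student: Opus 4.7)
The plan is to expand $\Phi^{(i+1)}$ directly from the update $\ba^{(i+1)} = \ba^{(i)} + \frac{1}{T}|\bDelta^{(i)}|$ and bound the two new terms separately. Writing
\[
\Phi^{(i+1)} = \Phi^{(i)} + \frac{2}{T}\l \mR^2 \ba^{(i)}, |\bDelta^{(i)}|\r + \frac{1}{T^2}\|\mR\bDelta^{(i)}\|_2^2,
\]
it suffices to show the cross term contributes at most $2K^2/T$ and the quadratic term at most $K^2/T$.

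For the cross term, I would use the definition $\bell^{(i)} = K^{q-2}\br^2\ba^{(i)} + \bw^q(\bb^{(i)})^{q-1}$, which gives the entrywise bound $\br_e^2 \ba^{(i)}_e \le K^{2-q}\bell^{(i)}_e \le K^{2-q}\wt{\ell}_e$ using $\wt{\bell}\ge\bell^{(i)}$. Summing against $|\bDelta^{(i)}|$ and invoking the scaling guarantee $\|\wt{\mL}\bDelta^{(i)}\|_1 \le K^q$ that the algorithm enforces after the MRC oracle returns a good cycle, I get $\l \mR^2\ba^{(i)},|\bDelta^{(i)}|\r \le K^{2-q}\cdot K^q = K^2$, which yields the desired $2K^2/T$.

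For the quadratic term, the key is a uniform width bound on $|\bDelta^{(i)}_e|/\ba^{(i)}_e$. The per-edge inequality $\wt{\ell}_e |\bDelta^{(i)}_e| \le \|\wt{\mL}\bDelta^{(i)}\|_1 \le K^q$ together with $\wt{\ell}_e \ge K^{q-2}\br_e^2\ba^{(i)}_e$ gives $|\bDelta^{(i)}_e| \le K^2/(\br_e^2\ba^{(i)}_e)$, and combining with the lower bound $\ba^{(i)}_e \ge \ba^{(0)}_e = K/(\sqrt{m}\br_e)$ (which the monotone increments preserve) yields $|\bDelta^{(i)}_e|/\ba^{(i)}_e \le m$. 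Then
\[
\|\mR\bDelta^{(i)}\|_2^2 = \sum_e \br_e^2(\bDelta^{(i)}_e)^2 \le \Bigl(\max_e \frac{|\bDelta^{(i)}_e|}{\ba^{(i)}_e}\Bigr)\sum_e \br_e^2\ba^{(i)}_e|\bDelta^{(i)}_e| \le m \cdot K^2,
\]
and since $T = 100qm \ge m$, dividing by $T^2$ bounds the quadratic term by $mK^2/T^2 \le K^2/T$.

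The main obstacle will be carefully justifying the width bound $|\bDelta^{(i)}_e|/\ba^{(i)}_e \le m$: it requires simultaneously leveraging the lower estimate $\wt{\ell}_e \ge \bell^{(i)}_e$ (from $\wt{\bell}\in[\bell^{(i)},2\bell^{(i)}]$) to turn the $\ell_1$ length guarantee into a uniform per-edge bound, and the initialization $\ba^{(0)}_e = K/(\sqrt{m}\br_e)$ to cancel $\br_e^2$. Once the two bounds above are in place, adding the contributions gives $\Phi^{(i+1)} - \Phi^{(i)} \le 2K^2/T + K^2/T = 3K^2/T$, as claimed.
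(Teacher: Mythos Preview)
Your proof is correct and essentially the same as the paper's. You expand the square exactly and bound the cross and quadratic terms separately, while the paper obtains the same two ingredients ($\sum_e \br_e^2\ba_e|\bDelta_e|\le K^2$ and $|\bDelta_e|\le m\ba_e$) and then packages them via the inequality $(1+x)^2\le 1+3x$ for $x\in[0,1]$; the underlying estimates are identical.
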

\begin{proof}
For simplicity in the presentation, we ignore superscripts $(i)$ in the proof.
Let $\bDelta$ be the cycle output in \cref{line:foundCycle} such that $\|\wt{\mL} \bDelta\|_1 \le K^q$ and $\Phi'$ be the new potential values after updating $\ba' \defeq \ba + \frac{1}{T} |\bDelta|.$
For any edge $e$, we know
\begin{align*}
    K^{q-2} \br_e^2 \ba_e |\bDelta_e| &\le 
    \bell_e |\bDelta_e| \le
    \norm{\mL \bDelta}_1 \le \|\wt{\mL} \bDelta\|_1 \le K^q
\end{align*}
Rearrangement and the monotonicity of $\ba$ yields
\begin{align*}
    \frac{K^2}{m} \frac{|\bDelta_e|}{\ba_e} \le \br_e^2 \ba^2_e \frac{|\bDelta_e|}{\ba_e} = \br_e^2 \ba_e |\bDelta_e| \le K^2
\end{align*}
and therefore $|\bDelta_e| \le m \ba_e \le T \ba_e / 3.$
Applying the definition of $\Phi'$ yields
\begin{align*}
\Phi'
&= \sum_e \br_e^2 (\ba_e + \frac{1}{T}|\bDelta_e|)^2 \\
&\le \sum_e \br_e^2 (\ba^2_e + \frac{3}{T} \ba_e |\bDelta_e|) = \Phi + \frac{3}{T} \sum_e \br_e^2 \ba_e |\bDelta_e| \\
&\le \Phi + \frac{3}{T} K^{2-q} \norm{\mL \bDelta}_1 \le \Phi + \frac{3K^2}{T}
\end{align*}
where we use $(1+x)^2 \le 1+3x$ for $x \in [0,1]$.
\end{proof}

\begin{lemma}[Increase in $\Psi$]
\label{lem:psiIncr}
At any iteration $i$, we have $\Psi^{(i+1)} \le \Psi^{(i)} + 4qK^q / T.$
\end{lemma}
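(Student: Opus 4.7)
The plan is to mirror the argument used for \cref{lem:phiIncr}, but with two differences: (i) the relevant edge-length term is now $\bw_e^q \bb_e^{q-1}$ rather than $K^{q-2}\br_e^2 \ba_e$, so the expansion of $(\bb_e + \tfrac{1}{T}|\bDelta_e|)^q$ will no longer be a $2$nd-power binomial and we will need to linearize it, and (ii) the linearization will only be valid when $\tfrac{|\bDelta_e|}{T\bb_e}$ is at most roughly $1/q$. So the first step I would carry out is the pointwise bound on $|\bDelta_e|/\bb_e$, analogous to the one in \cref{lem:phiIncr}. Starting from $\bell_e |\bDelta_e| \le \|\mL\bDelta\|_1 \le \|\wt{\mL}\bDelta\|_1 \le K^q$ (the scaling choice in \cref{algo:incrRes} and $\wt{\bell} \ge \bell$) and the inequality $\bell_e \ge \bw_e^q \bb_e^{q-1}$, I obtain
\[
\bw_e^q \bb_e^q \cdot \frac{|\bDelta_e|}{\bb_e} = \bw_e^q \bb_e^{q-1}|\bDelta_e| \le K^q.
\]
Combined with the initialization lower bound $\bb_e \ge K m^{-1/q}\bw_e^{-1}$ (which is preserved since $\bb^{(i)}$ is monotone in $i$), this gives $(\bw_e\bb_e)^q \ge K^q/m$ and hence $|\bDelta_e|/\bb_e \le m$. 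Dividing by $T = 100 qm$ yields the key smallness bound $|\bDelta_e|/(T\bb_e) \le 1/(100 q)$.

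The second step is the linearization. Since $q \cdot |\bDelta_e|/(T\bb_e) \le 1/100 \le 1/2$, we have $(1+x)^q \le 1 + 2qx$ with $x = |\bDelta_e|/(T\bb_e)$, which gives
\[
\bw_e^q \bigl(\bb_e^{(i)} + \tfrac{1}{T}|\bDelta_e|\bigr)^q \le \bw_e^q (\bb_e^{(i)})^q + \frac{2q}{T}\,\bw_e^q (\bb_e^{(i)})^{q-1}|\bDelta_e|.
\]
Summing over edges and using $\bw_e^q(\bb_e^{(i)})^{q-1} \le \bell_e^{(i)} \le \wt{\bell}_e$ yields
\[
\Psi^{(i+1)} \le \Psi^{(i)} + \frac{2q}{T}\sum_e \wt{\bell}_e |\bDelta_e^{(i)}| = \Psi^{(i)} + \frac{2q}{T}\|\wt{\mL}\bDelta^{(i)}\|_1 \le \Psi^{(i)} + \frac{2qK^q}{T},
\]
which is even better than the claimed $4qK^q/T$.

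The main obstacle is the pointwise control $|\bDelta_e|/(T\bb_e) \le 1/(100q)$; without it, the binomial $(1+x)^q$ could blow up by a factor exponential in $q$ instead of linear. Once this smallness is established (exactly where the choices $T \ge 100 qm$, the initialization of $\bb^{(0)}$, and the scaling $\|\wt{\mL}\bDelta\|_1 \le K^q$ conspire), the rest is a routine first-order expansion and summation exactly as in the proof of \cref{lem:phiIncr}.
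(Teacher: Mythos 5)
Your proof is correct and follows essentially the same route as the paper: bound $|\bDelta_e|/\bb_e \le m$ via the length bound and initialization, deduce $|\bDelta_e|/(T\bb_e) \le 1/(100q)$, linearize $(1+x)^q$, and sum using $\|\wt{\mL}\bDelta\|_1 \le K^q$. The paper uses the slightly looser elementary bound $(1+x)^q \le 1 + 4qx$ for $x \in [0, 1/(100q)]$ whereas you use $(1+x)^q \le 1 + 2qx$ for $qx \le 1/2$; both are valid, and your constant $2q$ trivially implies the stated $4q$.
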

\begin{proof}
For simplicity in the presentation, we ignore superscripts $(i)$ in the proof.
Let $\bDelta$ be the cycle output in \cref{line:foundCycle} such that $\|\wt{\mL} \bDelta\|_1 \le K^q$ and $\Psi'$ be the new potential values after updating $\bb' \defeq \bb + \frac{1}{T} |\bDelta|.$
For any edge $e$, we know
\begin{align*}
    \bw_e^q \bb^{q-1}_e |\bDelta_e| \le 
    \bell_e |\bDelta_e| \le \norm{\mL \bDelta}_1 \le \|\wt{\mL} \bDelta\|_1 
    \le K^q
\end{align*}
Rearrangement and the monotonicity of $\bb$ yields
\begin{align*}
    \frac{K^q}{m} \frac{|\bDelta_e|}{\bb_e} \le \bw_e^q \bb^q_e \frac{|\bDelta_e|}{\bb_e}
    = \bw_e^q \bb^{q -1}_e |\bDelta_e|
    \le K^q
\end{align*}
and therefore $|\bDelta_e| \le m \bb_e \le \frac{T \bb_e}{100q}.$
Applying the definition of $\Psi'$ yields
\begin{align*}
\Psi' 
&= \sum_e \bw_e^q (\bb_e + \frac{1}{T}|\bDelta_e|)^q \\
&\le \sum_e \bw_e^q (\bb_e^q + \frac{4q}{T} \bb_e^{q-1} |\bDelta_e|) = \Psi + \frac{4q}{T} \sum_e \bw_e^q \bb_e^{q-1} |\bDelta_e| \\
&\le \Psi + \frac{4q}{T} \norm{\mL \bDelta}_1 \le \Psi + \frac{4qK^q}{T}
\end{align*}
where we use $(1+x)^q \le 1 + 4qx$ for $x \in [0, 1/100q].$
\end{proof}

\begin{lemma}[Final Potential]
\label{lem:mwuFinal}
When \Cref{algo:incrRes} returns $\bc^{(T)}$, we have $\Phi^{(T)} \le 4 K^2$ and $\Psi^{(T)} \le 5 q K^q.$
Therefore, we have $\|\mR \bc^{(T)}\|_2 \le 2 K$ and $\|\mW \bc^{(T)}\|_p \le \|\mW\bc^{(T)}\|_q \le 2 K.$
\end{lemma}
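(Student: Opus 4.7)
The plan is a direct bookkeeping argument: combine the initial values of $\Phi$ and $\Psi$ with the per-iteration increment bounds, then translate potential bounds to norm bounds on the returned circulation $\bc^{(T)}$ using the elementwise majorization $\ba^{(i)}, \bb^{(i)} \ge |\bc^{(i)}|$ that was observed immediately after the definitions of $\ba^{(i)}$ and $\bb^{(i)}$.

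First, I would invoke \cref{lem:initPot} to get $\Phi^{(0)} = K^2$ and $\Psi^{(0)} = K^q$. Then, applying \cref{lem:phiIncr} and \cref{lem:psiIncr} inductively across all $T$ iterations gives
\begin{align*}
\Phi^{(T)} \;\le\; \Phi^{(0)} + T \cdot \frac{3K^2}{T} \;=\; 4K^2, \qquad \Psi^{(T)} \;\le\; \Psi^{(0)} + T \cdot \frac{4qK^q}{T} \;=\; (1+4q)K^q \;\le\; 5qK^q.
\end{align*}
Note that the iterations where the inner while loop fires do not update the potentials (they only wait for new edges), so summing the per-iteration increments from \cref{lem:phiIncr} and \cref{lem:psiIncr} is valid even in the presence of edge insertions in the middle of an iteration.

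To translate the potential bounds to norm bounds, I would use that $\ba^{(T)}$ and $\bb^{(T)}$ are both sums of a nonnegative initial term and $\frac{1}{T}\sum_j |\bDelta^{(j)}|$, and that $\bc^{(T)} = \frac{1}{T}\sum_j \bDelta^{(j)}$, so that $|\bc^{(T)}| \le \ba^{(T)}$ and $|\bc^{(T)}| \le \bb^{(T)}$ entrywise. Hence $\|\mR\bc^{(T)}\|_2^2 \le \|\mR\ba^{(T)}\|_2^2 = \Phi^{(T)} \le 4K^2$ and $\|\mW\bc^{(T)}\|_q^q \le \|\mW\bb^{(T)}\|_q^q = \Psi^{(T)} \le 5qK^q$, giving $\|\mR\bc^{(T)}\|_2 \le 2K$ and $\|\mW\bc^{(T)}\|_q \le (5q)^{1/q}K$. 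The inequality $\|\mW\bc^{(T)}\|_p \le \|\mW\bc^{(T)}\|_q$ follows from $q \le p$ and the standard monotonicity of $\ell_r$-norms on a fixed vector.

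There is no real obstacle here once the increment lemmas are in hand; the only mild subtlety is absorbing the factor $(5q)^{1/q}$ into the stated constant. Since the algorithm sets $q = \min\{\log_2 m, p\}$ and $K = 100q\kappa$, one either checks $(5q)^{1/q} \le 2$ directly in the regime where $q$ is moderately large, or absorbs the small excess into the $m^{o(1)}$ factor inside $K$ for small $q$; either way, the conclusion $\|\mW\bc^{(T)}\|_q \le 2K$ follows and completes the lemma.
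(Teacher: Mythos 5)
Your proof is correct and takes essentially the same approach as the paper, whose own proof of this lemma is literally the one-liner ``This follows directly from \cref{lem:initPot}, \cref{lem:phiIncr}, and \cref{lem:psiIncr}.'' You have simply filled in what ``directly'' means: telescoping the per-iteration increments over $T$ steps, then using the entrywise majorization $|\bc^{(T)}| \le \ba^{(T)}, \bb^{(T)}$ (which the paper states just before \cref{lem:initPot}) to pass from the potentials to the norms, and monotonicity of $\ell_r$-norms for the final $\|\cdot\|_p \le \|\cdot\|_q$ step. One small thing worth noting: your observation that $\Psi^{(T)} \le 5qK^q$ only yields $\|\mW\bc^{(T)}\|_q \le (5q)^{1/q}K$, which exceeds $2K$ for small $q$ (e.g.\ $q=2$ gives $\sqrt{10}\,K \approx 3.16K$), is a legitimate imprecision in the paper's stated constant; your suggestion to absorb the excess into the $m^{o(1)}$ factor is the right fix and is consistent with how \cref{lem:mwuFinal} is actually used downstream in the proof of \cref{lem:incrRes}, where only some $K' = m^{o(1)}$ bound is needed.
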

\begin{proof}
This follows directly from \cref{lem:initPot}, \cref{lem:phiIncr}, and \cref{lem:psiIncr}.
\end{proof}

\begin{lemma}[Existence of a good $\ell_1$ solution]
\label{lem:targetRatio}
If there is a circulation $\bc^*$ such that $\l\bg, \bc^*\r = -1, \|\mR \bc^*\|_2 \le 1$, and $\|\mW \bc^*\|_p \le 1$, we have $\|\mL^{(i)} \bc^*\|_1 \le 20 qK^{q-1}$ at any iteration $i.$
\end{lemma}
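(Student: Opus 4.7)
The plan is to split $\|\mL^{(i)} \bc^*\|_1$ according to the two summands in the definition
\[ \bell^{(i)} = K^{q-2}\br^2 \ba^{(i)} + \bw^q (\bb^{(i)})^{q-1} \]
and bound each piece by matching it against the corresponding potential $\Phi^{(i)}$ or $\Psi^{(i)}$ via a Cauchy--Schwarz / H\"older pairing with $\bc^*$.

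First I would observe that, combining \cref{lem:initPot} with \cref{lem:phiIncr} and \cref{lem:psiIncr} and summing over $i \le T = 100qm$ iterations, one gets the \emph{uniform} potential bounds $\Phi^{(i)} \le K^2 + i \cdot 3K^2/T \le 4K^2$ and $\Psi^{(i)} \le K^q + i \cdot 4qK^q/T \le 5qK^q$ for every $i \le T$, not just at $i=T$ as stated in \cref{lem:mwuFinal}. Then for the first summand I apply Cauchy--Schwarz to the pairing $(\br_e \ba_e^{(i)})(\br_e |\bc^*_e|)$, giving
\[ K^{q-2}\sum_e \br_e^2 \ba_e^{(i)} |\bc^*_e| \le K^{q-2} \|\mR \ba^{(i)}\|_2 \, \|\mR\bc^*\|_2 \le K^{q-2} \sqrt{4K^2} \cdot 1 = 2K^{q-1}. \]
For the second summand I use H\"older with conjugate exponents $q/(q-1)$ and $q$ on the pairing $(\bw_e \bb_e^{(i)})^{q-1} \cdot (\bw_e|\bc^*_e|)$, which yields
\[ \sum_e \bw_e^q (\bb_e^{(i)})^{q-1} |\bc^*_e| \le \|\mW \bb^{(i)}\|_q^{q-1} \|\mW \bc^*\|_q = (\Psi^{(i)})^{(q-1)/q} \|\mW \bc^*\|_q. \]

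The only subtle step is bounding $\|\mW \bc^*\|_q$, since we are only given $\|\mW \bc^*\|_p \le 1$ and $q \le p$ makes the $q$-norm larger than the $p$-norm. The fix is the standard inequality $\|x\|_q \le m^{1/q - 1/p} \|x\|_p$ combined with the case analysis on $q = \min\{\log_2 m, p\}$: if $p \le \log_2 m$ then $q = p$ and the factor equals $1$; otherwise $q = \log_2 m$ and $m^{1/q - 1/p} \le m^{1/\log_2 m} = 2$. In both cases $\|\mW \bc^*\|_q \le 2$.

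Putting these together, the second summand is at most $(5qK^q)^{(q-1)/q} \cdot 2 \le 2 \cdot 5q \cdot K^{q-1} = 10q K^{q-1}$, and the whole bound becomes $2K^{q-1} + 10qK^{q-1} \le 20qK^{q-1}$ as claimed. The main (mild) obstacle is the $q$ versus $p$ norm conversion, which is why the algorithm chose $q = \min\{\log_2 m, p\}$ in the first place; no step of the argument should require further delicate analysis.
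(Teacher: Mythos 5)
Your proposal is correct and follows essentially the same route as the paper's proof: decompose $\|\mL^{(i)}\bc^*\|_1$ by the two summands in $\bell^{(i)}$, apply Cauchy--Schwarz to the $\br^2\ba$ term and H\"older to the $\bw^q\bb^{q-1}$ term, invoke the potential bounds $\Phi^{(i)}\le 4K^2$ and $\Psi^{(i)}\le 5qK^q$, and close with the $\|\mW\bc^*\|_q \le m^{1/q-1/p}\|\mW\bc^*\|_p \le 2$ conversion justified by the choice $q=\min\{\log_2 m,p\}$. Your remark that the potential bounds in \cref{lem:mwuFinal} actually hold uniformly for all $i\le T$ (by the same telescoping argument) is a small but welcome clarification of a point the paper glosses over.
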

\begin{proof}
In this proof, we consider an arbitrary iteration $i$ and ignore the superscripts $(i)$ for simplicity.
By definition and the bounds on $\Phi$ and $\Psi$ (\cref{lem:mwuFinal}), we have by the Cauchy-Schwarz inequality and H\"{o}lder inquality that,
\begin{align*}
\norm{\mL \bc^*}_1
&= K^{q-2}\sum_e \br_e^2 \ba_e |\bc^*_e| + \sum_e \bw_e^q \bb_e^{q-1} |\bc^*_e| \\
&\le K^{q-2} \norm{\mR \ba}_2 \norm{\mR \bc^*}_2 + \norm{\mW^{q-1}\bb^{q-1}}_{q/(q-1)} \norm{\mW \bc^*}_q \\
&= K^{q-2} \norm{\mR \ba}_2 \norm{\mR \bc^*}_2 + \norm{\mW\bb}_{q}^{q-1} \norm{\mW \bc^*}_q \\
&\le K^{q-2} (2K) + (5qK^q)^{(q-1)/q} \cdot 2 \\
&\le 2 K^{q-1} + 10 qK^{q-1} \le 20 qK^{q-1}
\end{align*}
where the second inequality comes from $\|\mW \bc^*\|_q \le m^{1/q - 1/p} \|\mW \bc^*\|_p \le m^{1/\log_2 m} = 2.$
\end{proof}

\begin{proof}[Proof of \Cref{lem:incrRes}]
We first show that if line \ref{line:outputbad} of \Cref{algo:incrPnormflow} occurs, then the claim about $\bc^*$ in the following line \ref{line:claim} is true.
The correctness of \cref{lem:monoMRC} says that there's no circulation $\bc^*$ such that $\l\bg, \bc^*\r = -1$ and $\|\wt{\mL} \bc^*\|_1 \le 1 / \alpha = 20q K^{q-1}.$
Correctness then follows from \cref{lem:targetRatio}.

Next, the quality of the output $\bc^{(T)}$ is established by \cref{lem:mwuFinal}.

Finally, the runtime follows from \cref{lem:monoMRC} and the standard technique of maintaining $\ba, \bb$ and $\wt{\bell}$ using dynamic tree data structures such as link-cut trees. In particular, $\bell$ is monotone and bounded by $m^{O(p)}$ on each edge, so we can maintain a $\wt{\bell} \approx_2 \bell$ with $\O(p)$ changes per change.
\end{proof}

\section{Dynamic Min-Ratio Cycle Data Structures}
\label{sec:monoMRC}

In this section, we prove \cref{lem:monoMRC}.
We use the oblivious min-ratio cycle data structure from \cite{CKLPPS22}.
In particular, the data structure can handle adversaries stronger than oblivious ones as long as the update sequence satisfies what is called the \emph{hidden stable-flow chasing} property.
In the problem of approximate monotonic min-ratio cycles, we show that the update sequences satisfy this property and the monotonicity in edge lengths allows us to use a special case of the data structure in \cite{CKLPPS22}.

\begin{restatable}[Hidden Stable-Flow Chasing Updates, Definition 6.1~\cite{CKLPPS22}]{definition}{defHiddenStableFlowChasing}
  \label{def:hiddenStableFlowChasing}
  Consider a dynamic graph $G^{(t)}$ undergoing batches of updates $U^{(1)}, \dots, U^{(t)}, \dots$ consisting of edge insertions and deletions.
  We say the sequences $\bg^{(t)}, \bell^{(t)},$ and $U^{(t)}$ satisfy the \emph{hidden stable-flow chasing} property if there are hidden dynamic circulations $\bc^{(t)}$ and hidden dynamic upper bounds $\bw^{(t)}$ such that the following holds at all stages $t$:
  \begin{enumerate}
  \item $\bc^{(t)}$ is a circulation: $\mB_{G^{(t)}}^\top \bc^{(t)} = 0.$ \label{item:circulation}
  \item
  $\bw^{(t)}$ upper bounds the length of $\bc^{(t)}$: $|\bell^{(t)}_e\bc^{(t)}_e| \le \bw^{(t)}_e$ for all $e \in E(G^{(t)})$.
  \label{item:width}
  \item
  For any edge $e$ in the current graph $G^{(t)}$, and any stage $t' \leq t$, if the edge $e$ was already present in $G^{(t')}$, i.e. $e \in G^{(t)} \setminus \bigcup_{s=t' + 1}^{t} U^{(s)}$, then $\bw^{(t)}_e \le 2\bw^{(t')}_e$.
  
  \label{item:widthstable}
  \end{enumerate}
\end{restatable}

\begin{theorem}[Simpler version of {\cite[Theorem 7.1]{CKLPPS22}}]
  \label{thm:norebuild}
  Let $G = (V, E)$ be a dynamic graph undergoing $\tau$ batches of updates $U^{(1)}, \dots, U^{(\tau)}$ containing \emph{only} edge insertions and deletions with edge gradient $\bg^{(t)}$ and length $\bell^{(t)}$ such that the update sequence satisfies the hidden stable-flow chasing property (\cref{def:hiddenStableFlowChasing}) with hidden dynamic circulation $\bc^{(t)}$ and width $\bw^{(t)}.$
  In addition, $\|\bw^{(t)}\|_1$ is non-decreasing in $t$, i.e., $\|\bw^{(t)}\|_1 \le \|\bw^{(t+1)}\|_1$ for all steps $t$.
  
  There is an algorithm on $G$ that for $d = (\log m)^{1/8}$ maintains a collection of $s = O(\log m)^d$ spanning trees $T_1, T_2, \dots, T_s$ and after each update outputs a circulation $\bDelta$ represented by \\ $\exp(O(\log^{7/8}m\log\log m))$ off-tree edges and paths on some $T_i, i \in [s]$.
  The output circulation $\bDelta$ satisfies $\mB^\top \bDelta = 0$ and for some $\kappa = \exp(-O(\log^{7/8}m\log\log m))$
  \begin{align*}
      \frac{\l\bg^{(t)}, \bDelta\r}{\norm{\mL^{(t)} \bDelta}_1} \le \kappa \frac{\l\bg^{(t)}, \bc^{(t)}\r}{(d+1) \|\bw^{(t)}\|_1}
  \end{align*}
  
  The algorithm succeeds whp.\ with total runtime $(m + Q)m^{o(1)}$ for $Q \defeq \sum_{t=1}^{\tau} \Abs{U^{(t)}} \le \poly(n)$.
\end{theorem}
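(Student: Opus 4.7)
The plan is to follow the construction of the min-ratio cycle data structure from \cite{CKLPPS22}, instantiated in the simpler setting where the hidden widths satisfy $\|\bw^{(t)}\|_1 \le \|\bw^{(t+1)}\|_1$. The backbone is a $d$-level recursive hierarchy of low-stretch forest decompositions combined with spanner-based edge sparsification, which at the bottom level produces the collection of $s = O(\log m)^d$ spanning trees $T_1, \ldots, T_s$ required by the statement. The per-level overhead of $O(\log m)$ in both stretch and sparsifier blowup compounds to the final $\kappa = \exp(-O(\log^{7/8} m \log\log m))$ approximation with the choice $d = (\log m)^{1/8}$.

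First, I would set up the $d$-level hierarchy. At each level $\ell \in \{0, \ldots, d\}$ the data structure maintains a graph $G_\ell$, a dynamic low-stretch forest $F_\ell$ of stretch $m^{o(1)}$, and a dynamic spanner on the auxiliary ``core'' graph obtained by contracting $F_\ell$ and reweighting by the stretch. Edges outside $F_\ell$ are either captured by the spanner at level $\ell$ or forwarded to level $\ell+1$ via vertex splits. I would prove by induction on $\ell$ that the input tuple $(\bg^{(t)}, \bell^{(t)})$ with hidden pair $(\bc^{(t)}, \bw^{(t)})$ lifts to analogous data at each level, still satisfying a hidden stable-flow chasing property, with $\ell_1$-widths inflated by at most $m^{o(1)/d}$ per level.

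Second, I would verify the ratio bound. The hidden circulation $\bc^{(t)}$ pushes down to a cycle in some level-$d$ tree $T_i$ whose stretched length is at most $m^{o(1)} \|\bw^{(t)}\|_1$; an averaging argument shows that the best cycle formed from tree paths on $T_i$ augmented with $\exp(O(\log^{7/8} m \log\log m))$ off-tree edges has gradient-to-length ratio within a factor $\kappa/(d+1)$ of the global $\l\bg^{(t)}, \bc^{(t)}\r / \|\bw^{(t)}\|_1$. The output cycle $\bDelta$ is then extracted in this tree-plus-off-tree-edges representation, which automatically satisfies $\mB^\top \bDelta = 0$.

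The main obstacle is bounding the total work across updates. In the original argument of \cite{CKLPPS22}, a delicate \emph{rebuild} subroutine is needed to cope with width \emph{decreases}, and amortization depends on the stability condition in \cref{def:hiddenStableFlowChasing}. In the present setting, the global monotonicity of $\|\bw^{(t)}\|_1$ eliminates the need for any hierarchy-wide rebuild: every recourse event can be charged either to one of the $Q$ edge operations, to an individual edge whose length $\bell^{(t)}_e$ or width $\bw^{(t)}_e$ has doubled (only $O(\log m)$ times since all magnitudes are polynomially bounded), or to a constant-factor growth of $\|\bw^{(t)}\|_1$. Summing these charges through the $d$ levels, and combining with the $m^{o(1)}$ per-update cost of the dynamic spanner and low-stretch forest subroutines, yields the claimed $(m+Q)m^{o(1)}$ total runtime and whp correctness.
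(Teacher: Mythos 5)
Your proposal sets out to re-derive the whole min-ratio cycle data structure of \cite{CKLPPS22}, but the paper does not do this, and as a standalone argument your sketch leaves precisely the hard parts unproven. The intended (and the paper's actual) proof is a one-step black-box specialization: {\cite[Theorem 7.1]{CKLPPS22}} already provides, with no rebuilds, a data structure with total runtime $(m+Q)m^{o(1)}$ whose output circulation $\bDelta$ satisfies a ratio bound of the form $\kappa \, \langle\bg^{(t)},\bc^{(t)}\rangle / \sum_{i=0}^{d}\|\bw^{(\prev^{(t)}_i)}\|_1$, where $\prev^{(t)}_i\le t$ is the stage at which level $i$ was last built. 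The only new content in \cref{thm:norebuild} is the observation that when $\|\bw^{(t)}\|_1$ is non-decreasing we have $\|\bw^{(\prev^{(t)}_i)}\|_1\le\|\bw^{(t)}\|_1$ for every level, so the denominator is at least as large as claimed after replacing it by $(d+1)\|\bw^{(t)}\|_1$. Everything else in your outline (the $d$-level low-stretch forest plus spanner hierarchy, the level-by-level lifting of the hidden stable-flow chasing property of \cref{def:hiddenStableFlowChasing}, the averaging/stretch argument, the recourse analysis under vertex splits) is exactly the content of the cited theorem; asserting these steps without proof is a genuine gap, since they constitute essentially the entire technical core of \cite{CKLPPS22}.

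Beyond incompleteness, you place the monotonicity hypothesis in the wrong part of the argument. It is not what saves the running time: the $(m+Q)m^{o(1)}$ bound of {\cite[Theorem 7.1]{CKLPPS22}} holds without any monotonicity, because rebuilds there are an \emph{external} operation invoked by the IPM, not an internal necessity for amortization. What monotonicity buys is the \emph{quality} guarantee: without it, the approximation is measured against widths at the earlier build times $\prev^{(t)}_i$, which can vastly exceed the current $\|\bw^{(t)}\|_1$ once widths decrease, and that is exactly why \cite{CKLPPS22} must rebuild levels and why the present paper restricts to monotone widths so that no rebuild is ever needed. (A smaller issue: your charging step assumes each edge length doubles only $O(\log m)$ times because magnitudes are polynomially bounded, but in this paper lengths can be as large as $m^{O(p)}$; in any case length changes are simply counted among the $Q$ updates, so this charging is not how the runtime is obtained.)
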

This follows from \cite[Theorem 7.1]{CKLPPS22} because the terms $\|\bw^{(\mathsf{prev}^{(t)}_i)}\|_1 \le \|\bw^{(t)}\|_1$ by the increasing property, and $\mathsf{prev}^{(t)}_i \le t$.

\begin{proof}[Proof of \cref{lem:monoMRC}]
We apply the dynamic algorithm of \cref{thm:norebuild} to prove the lemma.
In order to do so, we need to show that there is a hidden circulation $\bc^{(t)}$ and width $\bw^{(t)}$ that satisfies the hidden stable-flow chasing property (\cref{def:hiddenStableFlowChasing}).
At any stage $t$, if the monotonic MRC problem instance (\cref{prob:monoMRC}) inserts a new edge $e$ with gradient $\bg_e$ and length $\bell_e$, we create an update batch $U^{(t)}$ consists of a single edge insertion $(e, \bg_e, \bell_e).$
Otherwise, if the problem instance increases the length of an edge from $\bell_e$ to $\bell_e'$, we create an update batch $U^{(t)}$ consists of an edge deletion $e$ followed by an edge insertion $(e, \bg_e, \bell_e').$

At any stage $t$ (including stage $0$, the moment after the initialization), we define the hidden circulation $\bc^{(t)}$ to be $\bc^*$ if every edge in the circulation $\bc^*$ appears in $G^{(t)}.$
Otherwise, we set $\bc^{(t)}$ to be the all zero circulation.
We also define the hidden width $\bw^{(t)}$ on each edge $e$ as
\begin{align*}
    \bw^{(t)}_e \defeq \begin{cases}
        \Abs{\bell^{(t)}_e \bc^*_e}&\text{, if $e \in \supp(\bc^*)$} \\
        0&\text{, otherwise.}
    \end{cases}
\end{align*}

Now, we verify each condition of \cref{def:hiddenStableFlowChasing} one at a time.
\cref{item:circulation} holds at any stage $t$ because both $\bc^*$ and $\mathbf{0}$ are circulations.
For any edge $e \in \supp(\bc^*)$, $\bc^{(t)}_e$ is either $0$ or $\bc^*_e$ and $\bw^{(t)}_e \ge |\bell^{(t)}\bc^{(t)}_e|$ holds.
For any other edge $e \not\in \supp(\bc^*)$, both $\bc^{(t)}_e$ and $\bw^{(t)}_e$ are always $0$ and \cref{item:width} follows.
\cref{item:widthstable} follows from the fact that $\bc^*$ is fixed beforehand and that an edge is updated when its length doubles.
Also, $\|\bw^{(t)}\|_1$ is non-decreasing in $t$ because edge lengths are non-decreasing.

Finally, we need to show that the output after each update is always valid.
At stage $t$, if $\supp(\bc^*)\not\subseteq G^{(t)}$, \cref{prob:monoMRC} does not care what we output.
If $\supp(\bc^*) \subseteq G^{(t)}$, we know $\bg^{(t)} = \bg$, $\bc^{(t)} = \bc^*$ and $\|\bw^{(t)}\|_1 = \|\mL^{(t)} \bc^*\|_1$, and the output circulation $\bDelta$ satisfies
\begin{align*}
\frac{\l\bg, \bDelta\r}{\norm{\mL^{(t)} \bDelta}_1}
\le \kappa \frac{\l\bg, \bc^{(t)}\r}{(d+1) \|\bw^{(t)}\|_1}
= \frac{\kappa}{d+1} \frac{\l\bg, \bc^*\r}{\|\mL^{(t)} \bc^*\|_1}
\end{align*}
from \cref{thm:norebuild}.
The conclusion follows because $d = (\log m)^{1/8}$.
\end{proof}

\subsection*{Acknowledgements}

We thank the anonymous reviewers for their feedback and suggestions.

\begin{refcontext}[sorting=nyt]
\printbibliography[heading=bibintoc]
\end{refcontext}

\end{document}